\renewcommand{\tt}{\rm \ttfamily \bfseries }
\newcommand{\codefont}{\tt}
\newcommand{\equprogram}[1]{%
\def\separator{1.3333ex}%
\frenchspacing%
\refstepcounter{equation}%
\par\vspace\separator\hspace{1em}%
$\vcenter{\codefont\noindent{#1}}$%
\raisebox{-0.3ex}{\kern-2.75em\llap{\rm (\theequation)}}
\par\vspace\separator\noindent\kern-.0em%
}
\newcommand{\code}[1]{\mbox{\codefont{#1}}}
\newcommand{\us}{\raise-.8ex\hbox{-}}
\newcommand{\xtilde}{$\!$\raise-.75ex\hbox{\char`\~}} 
\newcommand{\Signature}{\Sigma}
\newcommand{\Variables}{{\cal X}}
\newcommand{\Rules}{{\cal R}}
\newcommand{\Constructors}{{\cal C}}
\newcommand{\Operations}{{\cal D}}
\newcommand{\tostar}{\buildrel * \over \to}
\newcommand{\toplus}{\buildrel \mbox{\rm \tiny +} \over \to}
\newcommand{\toequal}{\to^{\llap{\raisebox{.4ex}{\tiny =~}}}}
\newcommand{\norm}{\mbox{\bf N}}
\newcommand{\head}{\mbox{\bf H}}
\def\tree{{\cal T}}
\def\pattern{\pi}
\newcommand{\w}{\noindent{\rm \phantom{XX}}} 
\newcommand{\interpol}[1]{\$\{#1\}}
\newcommand{\erasure}{{\cal E}}
\newcommand{\headcomp}[1]{\mbox{\bf H$_{#1}$}}
\newtheorem{definition}{Definition}
\newtheorem{lemma}{Lemma}
\newtheorem{theorem}{Theorem}
\newtheorem{corollary}{Corollary}
\newtheorem{example}{Example}
\newcommand{\COMMENT}[1]{}
\def\bigbox#1{\raisebox{-3.5pt}{\vbox{\hrule\hbox{\vrule\kern8pt
  \vbox{\kern8pt\hbox{#1}\kern8pt}\kern8pt\vrule}\hrule}}}
\def\smallbox#1{\raisebox{-3.5pt}{\vbox{\hrule\hbox{\vrule\kern3pt
  \vbox{\kern3pt\hbox{#1}\kern3pt}\kern3pt\vrule}\hrule}}}
\title{Needed Computations Shortcutting Needed Steps}
\author{Sergio Antoy
\institute{Computer Science Dept.\\ Portland State University\\ Oregon, U.S.A.}
\email{antoys@pdx.edu}
\and
Jacob Johannsen
\institute{Dept. of Computer Science\\ Aarhus University\\ Denmark }
\email{cnn@cs.au.dk}
\and
Steven Libby
\institute{Computer Science Dept.\\ Portland State University\\ Oregon, U.S.A.}
\email{slibby@pdx.edu}
}
\begin{document}
\maketitle

\begin{abstract}
We define a compilation scheme for a constructor-based, strongly-sequential,
graph rewriting system which shortcuts some needed steps.
The object code 
is another constructor-based graph rewriting system.
This system is normalizing for the original system
when using an innermost strategy.
Consequently, the object code can be easily implemented by eager
functions in a variety of programming languages.
We modify this object code in a way that 
avoids total or partial construction of the \emph{contracta}
of some needed steps of a computation.
When computing normal forms in this way,
both memory consumption and execution time are reduced compared
to ordinary rewriting computations in the original system.
\end{abstract}


\section{Introduction}
\label{Introduction}

Rewrite systems are models of computations
that specify the actions, but not the control.
The object of a computation is a graph referred to as an
\emph{expression}.  
The actions are encoded by rules that define how
to replace (rewrite) one expression with another. 
The goal of a computation is to reach an expression,
called a \emph{normal form}, that cannot be further rewritten.

In the computation of an expression, 
a rewrite system does not tell which subexpression should be replaced
to reach the goal.
\begin{example}
Consider the following rewrite system.
The syntax is Curry \cite{Hanus12Curry}.
\equprogram{%
\label{contrived}%
loop = loop \\
snd (\us,y) = y
}
A computation of \code{snd\,(loop,0)} terminates with \code{0} if
the second rule of (\ref{contrived}) is ever applied, but
goes on forever without making any progress if only the first
rule is applied.
\end{example}
\noindent
In a computation a \emph{strategy} is a policy or algorithm that
defines both which subexpression should be replaced
and its replacement.
The intended goal of a strategy is to efficiently produce a
normal form of an expression when it exists.
A practical strategy, called \emph{needed},
is known for the class of the strongly sequential
term rewriting systems \cite{HuetLevy91-both}.
This strategy relies on the fact that, in every reducible expression $e$,
there exists a \emph{redex},
also called \emph{needed}, that is
reduced in any computation of $e$ to a normal form.

The needed strategy is defined and implemented as follows:
given an expression $e$, while $e$ is reducible, reduce an arbitrarily
chosen, needed redex of $e$.  In the systems considered in
this paper, finding a needed redex is
easy without look-ahead \cite{Antoy05JSC}.
This strategy is normalizing:
if an expression $e$ has a normal form, repeatedly reducing
arbitrary needed redexes will terminate with that normal form.
This strategy is also optimal in the number
of reduced redexes for graph (not term) rewriting.

The above outline shows that implementing a needed strategy is a
relatively straightforward task. Surprisingly, however, it is possible
to shortcut some of the needed steps in the computation. This paper
shows how this shortcutting can be introduced into an implementation
of a needed strategy.

Terminology and background information are
recalled in Sect.~\ref{Preliminaries}.
The compilation scheme and its properties are
in Sect.~\ref{Compilation} and \ref{Compilation Properties}.
The transformation that allows shortcutting needed redexes,
and its properties, are in Sect.~\ref{Transformation}
and \ref{Transformation Properties}.
Sect.~\ref{Benchmarking} presents two benchmarks
and sketches further opportunities
to shortcut needed steps.
Sect.~\ref{Functional Logic Programming} discusses the application
of our work to the implementation of functional logic languages.
Related work and conclusion are in Sect.~\ref{Related Work}
and \ref{Conclusion}, respectively.

\section{Preliminaries}
\label{Preliminaries}

A \emph{rewrite system} is a pair $(\Signature \cup \Variables, \Rules)$
in which
$\Signature = \Constructors \uplus \Operations$
is a \emph{signature} partitioned into 
\emph{constructors} and \emph{operations} (or functions), 
$\Variables$ is a denumerable set of \emph{variables},
and $\Rules$ is a set of \emph{rewrite rules} defined below.
Without further mention, we assume that the symbols of
the signature have a type, and that any expression
over the signature is well typed.

An \emph{expression} is a single-rooted, directed, acyclic \emph{graph}
defined in the customary way~\cite[Def. 2]{EchahedJanodet97IMAG}.
An expression $e$ is a \emph{constructor form} (or \emph{value})
if, and only if, every node of $e$ is labeled by a constructor symbol.
Constructor forms are normal forms, but not vice versa.
For example, $\emph{head}([\,])$ where \emph{head} is the usual operation that
returns the first element of a (non-empty) list, is a normal form, but not a constructor form.
In a constructor-based system,
such expressions are regarded as \emph{failures} or 
\emph{exceptions} rather than results of computations.
Likewise, a \emph{head-constructor form} is an expression
whose root node is labeled by a constructor symbol.

A \emph{rule} is a graph with two roots abstracting
the left- and right-hand sides respectively.
The rules follow the \emph{constructor discipline} \cite{ODonnell77}.  
Each rule's left-hand side is a \emph{pattern}, i.e.,
an \emph{operation} symbol applied to zero or more expressions
consisting of only \emph{constructor} symbols and variables.
Rules are left-linear, i.e., the left-hand side is a tree.

The objects of a computation are graphs rather than terms.
Sharing some subexpressions of an expression
is a requirement of functional logic programming
\cite{CRWL99JLP,Hanus94JLP,Hanus13,LopezFraguas14TPLP}. 
Incidentally, this sharing ensures that needed redexes
are never duplicated during a computation.
The difference between our graphs and ordinary terms 
concerns only the sharing of subexpressions.

A \emph{computation} of an expression $t$ is a possibly infinite sequence 
\begin{displaymath}
t = t_0 \to t_1 \to \ldots
\end{displaymath}
such that
$t_i \to t_{i+1}$ is a rewrite step \cite[Def. 23]{EchahedJanodet97IMAG}.
For all $i$, $t_i$
is a \emph{state} of the computation of $t$.

Given a rewrite system $R$, an \emph{expression of} $R$
is an expression over the signature of $R$.
When $s$ is a signature symbol and $n$ is a natural number,
$s/n$ denotes that $n$ is the \emph{arity} of $s$.
When $t$ and $u$ are expressions and $v$ is a variable,
$[u/v]$ is the \emph{substitution} that maps $v$ to $u$, and
$t[u/v]$ is the application of $[u/v]$ to $t$.
The reflexive closure of the rewrite
relation ``$\to$'' is denoted ``$\toequal$''.

Each operation in $\Operations$ is \emph{inductively sequential};
that is, its rewrite rules are organized into a hierarchical structure called
a \emph{definitional tree} \cite{Antoy92ALP} which we informally
recall below.  
An example of a definitional tree is shown in (\ref{def-tree-example}).
In a definitional tree of an operation $f$, there are up to 3 kinds of nodes
called \emph{branch}, \emph{rule} and \emph{exempt}.
Each kind contains a pattern of $f$ and other items of information
depending on the kind.
A \emph{rule} node with pattern $\pattern$ contains a rule of $f$
whose left-hand side is equal to $\pattern$
modulo renaming nodes and variables.
An \emph{exempt} node with pattern $\pattern$ contains no other information.  
There is no rule of $f$ whose left-hand side is equal to $\pattern$.
A \emph{branch} node with a pattern $\pattern$ contains
children that are subtrees of the definitional tree.
At least one child is a \emph{rule} node.
The children are obtained by ``narrowing'' pattern $\pattern$.
Let $x$ be any variable of $\pattern$, which is called \emph{inductive}.
For each constructor $c/m$ of the type of $x$,
there is a child whose pattern is obtained from $\pattern$
by instantiating $x$ with $c(x_1,\ldots x_m)$, where $x_i$ is a fresh
variable.  An operation $f/n$ is \emph{inductively sequential}
\cite{Antoy92ALP}
if there exists a definitional tree whose root has pattern
$f(x_1,\ldots x_n)$, where $x_i$ is a fresh variable,
and whose leaves contain all, and only, the rules of $f$.
A rewrite system is \emph{inductively sequential} if
all of its operations are inductively sequential.

Inductively sequential operations can be thought of as ``well designed'' 
with respect to evaluation.
To compute a needed redex of an expression $e$ rooted by an operation $f$,
match to $e$ the pattern $\pattern$ of a maximal (deepest in the tree) 
node $N$ of a definitional tree of $f$.
If $N$ is an \emph{exempt} node, $e$ has no constructor normal form,
and the computation can be aborted.
If $N$ is a \emph{rule} node, $e$ is a redex and can be
reduced by the rule in $N$.
If $N$ is a \emph{branch} node, let $x$ be the inductive variable
of $\pattern$ and $t$ the subexpression of $e$ to which
$x$ is matched.
Then, recursively compute a needed redex of $t$.

The inductively sequential systems are the intersection
\cite{HanusLucasMiddeldorp98IPL} of the strongly sequential systems
\cite{HuetLevy91-both} and the constructor-based systems
\cite{ODonnell77}.
The following notion \cite{AntoyJost13PPDP}
for inductively sequential systems is key to our work.
We abuse the word ``needed'' because we will show that our notion
extends the classic one \cite{HuetLevy91-both}.
Our notion is a binary relation on nodes,
or equivalently on the subexpressions rooted by these nodes,
since they are in a bijection.

\begin{definition}
\label{needed-node}
Let $R$ be an inductively sequential system,
$e$ an expression of $R$ rooted by a node $p$,
and $n$ a node of $e$.
Node $n$ is \emph{needed for} $e$, and similarly 
is \emph{needed for} $p$,
if, and only if, in any computation of $e$
to a head-constructor form, the subexpression of $e$ at
$n$ is derived to a head-constructor form.
A node $n$ (and the redex rooted by $n$, if any)
of a state $e$ of a computation in $R$ is \emph{needed} if, and only if, it is needed
for some outermost operation-rooted subexpression of $e$.
\end{definition}
Our ``needed'' relation is interesting only when both nodes
are labeled by operation symbols.
If $e$ is an expression whose root node $p$ is labeled
by an operation symbol, then $p$ is trivially needed for $p$.
This holds whether or not $e$ is a redex and even
when $e$ is \emph{already} a normal form, e.g., $\emph{head}([\,])$.
In particular, \emph{any} expression that is not a value has 
pairs of nodes in the needed relation.
Finally, our definition is concerned with
reaching a \emph{head-constructor} form,
not a \emph{normal form}.

Our notion of need generalizes the classic notion
\cite{HuetLevy91-both}.  Also, since our systems follow the constructor
discipline \cite{ODonnell77} we are not interested 
in expressions that do not have a value.

\begin{lemma}
\label{extension}
Let  $R$ be an inductively sequential system and $e$ an expression
of $R$ derivable to a value.
If $e'$ is an outermost operation-rooted
subexpression of $e$, and $n$ is both a node needed for $e'$ and
the root of a redex $r$, then $r$ is a needed redex of $e$
in the sense of \emph{\cite{HuetLevy91-both}}.
\end{lemma}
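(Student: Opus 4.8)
The plan is to argue by contradiction, working directly from the negation of the Huet--Lévy definition. By that definition, $r$ fails to be a needed redex of $e$ in the sense of \cite{HuetLevy91-both} precisely when there is a reduction $D : e \tostar v$ to a normal form in which no residual of $r$ is ever contracted. Since $e$ is derivable to a value and the systems considered are orthogonal (hence confluent, so normal forms are unique), that normal form $v$ must be the value of $e$; it therefore suffices to rule out such a $D$. So I would fix an arbitrary $D$ reaching the value and assume, for contradiction, that it contracts no residual of $r$.

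First I would exploit that $e'$ is an \emph{outermost} operation-rooted subexpression: every node of $e$ strictly above the root of $e'$ is labeled by a constructor, so no step of $D$ can occur strictly above $e'$, and the only steps that can affect $e'$ are those performed at or below its root (this already absorbs the shared case, since a shared node below the root of $e'$ is still below the root of $e'$). Because $v$ is a constructor form, the image of the root of $e'$ is constructor-labeled in $v$, so the steps of $D$ acting within $e'$ must drive $e'$ to a head-constructor form. Collecting exactly these steps yields a genuine computation $e' \tostar w$ of $e'$ to a head-constructor form, and this sub-computation inherits from $D$ the property that no residual of $r$ is contracted.

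Next I would invoke Definition~\ref{needed-node}: since $n$ is needed for $e'$, the sub-computation $e' \tostar w$ must derive the subexpression rooted at $n$ --- which is exactly the redex $r$ --- to a head-constructor form. The crux is then a small residual lemma: an operation-rooted subexpression can acquire a constructor root only through a rewrite step applied at a residual of that root, and before the first such step only steps strictly below $n$ occur; in an inductively sequential system those steps fall inside the variable-bound arguments of the pattern of $r$, hence preserve the root match and keep the unique residual of $r$ a redex at $n$. Consequently the first step at the root contracts a residual of $r$, contradicting the inherited property. This contradiction establishes that $r$ is needed for $e$ in the classical sense.

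The step I expect to be the main obstacle is making the passage between the graph computation and the term on which the Huet--Lévy notion is defined fully rigorous: a single graph redex $r$ may unravel to several redexes of the corresponding term, so ``residual of $r$'' must be matched up correctly across the two levels. I would resolve this by appealing to the paper's remark that sharing never duplicates needed redexes, which gives an exact correspondence between the residuals of $r$ in the graph derivation and the residuals of the chosen redex in the unraveled term derivation, so that the residual bookkeeping of the previous paragraph transports between the two levels without loss.
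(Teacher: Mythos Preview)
Your proposal is correct and follows essentially the same route as the paper's proof: both exploit that the context above $e'$ consists only of constructors, deduce that any derivation of $e$ to a value must drive $e'$ to a head-constructor form, and then invoke Definition~\ref{needed-node} to conclude that $r$ must itself be reduced in any such derivation. The paper's argument is direct and terse where yours is by contradiction and more explicit about residuals and the graph/term correspondence, but the underlying idea is the same.
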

\begin{proof}
Since $e'$ is an outermost operation-rooted subexpression of $e$,
any node in any path from the root of $e$ to the root of $e'$,
except for the root of $e'$, is labeled by constructor symbols.
Hence, $e$ can be derived to a value only if $e'$ is derived to
a value and $e'$ can be derived to a value only if
$e'$ is derived to a head-constructor form.
By assumption, in any derivation of $e'$ to a head-constructor form
$r$ is derived to a head-constructor form,
hence it is reduced.  Thus, $r$ is a needed redex of $e$
according to \cite{HuetLevy91-both}.
\end{proof}

\begin{lemma}
\label{need-transitivity}
Let  $R$ be an inductively sequential system, $e$ an expression
of $R$, $e_1$, $e_2$ and $e_3$ subexpressions of $e$
such that $n_i$ is the root of $e_i$ and the label
of $n_i$ is an operation, for $i=1,2,3$.
If $n_3$ is needed for $n_2$ and $n_2$ is needed for $n_1$,
then $n_3$ is needed for $n_1$.
\end{lemma}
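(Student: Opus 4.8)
The plan is to unfold Definition~\ref{needed-node} for both hypotheses and for the goal, and then to bridge them by \emph{projecting} a computation of $e_1$ onto its subexpression $e_2$. First I observe that, for the relation ``needed for'' to be defined at all, $n_3$ must be a node of $e_2$ and $n_2$ a node of $e_1$ (Definition~\ref{needed-node} requires $n$ to be a node of $e$); hence $e_3$ is a subexpression of $e_2$, $e_2$ is a subexpression of $e_1$, and the three are nested. Spelling out the goal, I must show that every computation of $e_1$ to a head-constructor form derives the subexpression at $n_3$, namely $e_3$, to a head-constructor form.

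So I would fix an arbitrary computation $C\colon e_1 = s_0 \to s_1 \to \cdots \to s_k$ whose last state $s_k$ is a head-constructor form, and argue in two stages. Since $C$ is a computation of $e_1$ to a head-constructor form, the hypothesis that $n_2$ is needed for $n_1$ applies to it, so the subexpression of $e_1$ at $n_2$, i.e.\ $e_2$, is derived to a head-constructor form at some state $s_j$ of $C$. Because the system is constructor based, no rule rewrites a constructor-rooted expression at its root, so once $n_2$ carries a constructor it keeps one; this lets me speak unambiguously of a prefix $s_0 \to \cdots \to s_j$ at whose end $n_2$ is head-constructor.

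The key step is to extract from this prefix a bona fide computation of $e_2$. Rewriting is local: reducing a redex only builds its contractum and redirects the redex's root, affecting solely the subgraph rooted there (cf.\ the definition of rewrite step in~\cite{EchahedJanodet97IMAG}). Consequently, the steps of the prefix performed at nodes lying in the subgraph rooted at $n_2$, taken in order, constitute a computation $C'\colon e_2 \to \cdots$ whose states coincide, node for node, with the subexpressions rooted at $n_2$ in the corresponding states of $C$; the remaining steps of $C$ leave $n_2$'s subexpression unchanged. In particular $C'$ ends at the head-constructor form that $n_2$ reaches at $s_j$, so it is a computation of $e_2$ to a head-constructor form. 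Applying the hypothesis that $n_3$ is needed for $n_2$ to $C'$, the subexpression at $n_3$, i.e.\ $e_3$, is derived to a head-constructor form within $C'$. Since every step of $C'$ is a step of $C$, the same holds along $C$, and as $C$ was arbitrary, $n_3$ is needed for $n_1$.

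The main obstacle I anticipate is making the projection of the third paragraph rigorous in the presence of \emph{sharing}: a node inside $e_2$ may be shared with parts of $e$ outside $e_2$, so a step whose ``context'' lies outside $e_2$ could still touch a node of $e_2$. The clean way to handle this is to classify each step of $C$ by the \emph{position of its redex's root} rather than by any enclosing request: a step belongs to $C'$ exactly when that root lies in the subgraph reachable from $n_2$. Under this reading the locality of a single rewrite step guarantees both that $C'$ is well formed and that its states track the subexpression rooted at $n_2$, which is all the argument requires.
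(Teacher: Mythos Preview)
Your proof is correct. The paper's own argument is considerably shorter: it simply restates each hypothesis in contrapositive form—``if $e_3$ is not derived to a constructor-rooted form, $e_2$ cannot be derived to a constructor-rooted form'', and likewise for $e_2$ and $e_1$—and chains the two implications. That is the same logical content as your direct argument, just taken contrapositively and without fixing a particular computation.

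Where you genuinely add something is the projection step. Definition~\ref{needed-node} quantifies over computations \emph{of the expression in question}, so the hypothesis ``$n_3$ is needed for $n_2$'' speaks only about computations of $e_2$, not of $e_1$; the paper's proof silently conflates the two. Your explicit extraction of $C'$ from $C$—keeping exactly those steps whose redex root lies in the subgraph reachable from $n_2$—is precisely the bridge that makes the chaining rigorous, and your remarks on sharing correctly isolate the only delicate point (and resolve it the right way, by classifying steps via the redex root). So the two proofs share the same idea; the paper's buys brevity, yours buys precision.
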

\begin{proof}
By hypothesis,
if $e_3$ is not derived to a constructor-rooted form,
$e_2$ cannot be derived to a constructor-rooted form,
and 
if $e_2$ is not derived to a constructor-rooted form,
$e_1$ cannot be derived to a constructor-rooted form.
Thus,
if $e_3$ is not derived to a constructor-rooted form,
$e_1$ cannot be derived to a constructor-rooted form.
\end{proof}

\section{Compilation}
\label{Compilation}

\begin{figure}
\label{compiler}%
\bigbox{%
$\vcenter{%
{\tt
compile $\tree$ \\
{\tiny 01}\w case $\tree$ of \\
{\tiny 02}\w when $branch(\pattern,o,\bar \tree)$ then \\
{\tiny 03}\w\w $\forall\, \tree_i \in \bar \tree$ compile $\tree_i$ \\
{\tiny 04}\w\w output \mbox{\rm ``}$\head(\interpol{\pattern}) = 
   \head(\interpol{\pattern[\head(\pattern|_o)]_o})$\mbox{\rm ''} \\
{\tiny 05}\w when $rule(\pattern,l \to r)$ then \\
{\tiny 06}\w\w case $r$ of \\
{\tiny 07}\w\w\w when operation-rooted then \\
{\tiny 08}\w\w\w\w output \mbox{\rm ``$\head(\interpol{l}) = \head(\interpol{r})$''}  \\
{\tiny 09}\w\w\w when constructor-rooted then \\
{\tiny 10}\w\w\w\w output \mbox{\rm ``$\head(\interpol{l}) = \interpol{r}$''}  \\
{\tiny 11}\w\w\w when variable then \\
{\tiny 12}\w\w\w\w for each constructor $c/n$ of the sort of $r$\\
{\tiny 13}\w\w\w\w\w let $l' \to r' = (l \to r)[c(x_1,\ldots x_n)/r]$ \\
{\tiny 14}\w\w\w\w\w output \mbox{\rm ``$\head(\interpol{l'}) = \interpol{r'}$''} \\
{\tiny 15}\w\w\w\w output \mbox{\rm ``$\head(\interpol{l}) = \head(\interpol{r})$''}  \\
{\tiny 16}\w when $exempt(\pattern)$ then \\
{\tiny 17}\w\w output \mbox{\rm ``}$\head(\interpol{\pattern}) = \mbox{\rm abort}$\mbox{\rm ''}
}
}$ 
 \kern-22pt } 
\caption{\label{compile}
Procedure \code{compile} takes a definitional tree of an operation $f$
of $R$ and produces the set of rules of $\head$ that pattern match
$f$-rooted expressions.}
\end{figure}

For simplicity and abstraction,
we present the object code, $C_R$, of $R$
as a constructor-based graph rewriting system
as well.
$C_R$ has only two operations called \emph{head} and \emph{norm},
and denoted $\head$ and $\norm$, respectively.
The constructor symbols of $C_R$ are all, and only, the
symbols of $R$.
The rules of $C_R$ have a priority established by the textual order.
A rule reduces an expression $t$ only if no other preceding rule
could be applied to reduce $t$.
These semantics are subtle, since $t$ could become reducible by the
preceding rule only after some internal reduction.
However, all our claims about computations in $C_R$ are stated for
an innermost strategy.
In this case, when a rule is applied, no internal reduction
is possible, and the semantics of the priority are straightforward.

Operation $\head$ is defined piecemeal for each operation of $R$.
Each operation of $R$ contributes a number of rules 
dispatched by pattern matching with textual priority.
The rules of $\head$ contributed by an operation with definitional
tree $\tree$ are generated by the procedure \code{compile} defined 
in Fig.~\ref{compile}.
The intent of $\head$ is to take an expression of $R$ rooted by
an operation and derive an expression
of $R$ rooted by a constructor by performing only needed steps. 

The expression ``$\interpol{x}$'' embedded in a string, denotes
interpolation as in modern programming languages, i.e., the argument $x$
is replaced by a string representation of its value.
The notation $t[u]_p$ stands for an expression
equal to $t$, in which the subexpression identified by $p$ is replaced by $u$.
In procedure \code{compile}, the notation is used
to ``wrap'' an application of $\head$ around
the subexpression of the pattern at $o$,
the inductive node.
An example is the last rule of (\ref{append-compiled}).
The loop at statement 12 is for collapsing rules,
i.e., rules whose right-hand side is a variable.
When this variable matches an expression rooted by a constructor
of $R$, no further application of $\head$ is required;
Otherwise, $\head$ is applied to the contractum.
Symbol ``abort'' is not considered an element of the signature
of $C_R$.  If any redex is reduced to ``abort'', the computation is
aborted since it can be proved that the expression object of the
computation has no \emph{constructor} normal form.

\begin{example}
Consider the rules defining the operation that
concatenates two lists, denoted by the infix identifier ``\code{++}'':
\equprogram{%
\label{append-rules}%
[]++y = y \\
(x:xs)++y = x:(xs++y)
}
The definitional tree of operation ``\code{++}'' is pictorially
represented below.
The only \emph{branch} node of this tree is the root.
The inductive variable of this branch, boxed in the representation, is \code{x}.
The \emph{rule} nodes of this tree are the two leaves.
There are no \emph{exempt} nodes in this tree since 
operation ``\code{++}'' is completely defined.
\begin{equation}
\label{def-tree-example}
\vcenter{
\xymatrix@!=17pt{
& \mbox{\tt \smallbox{x}++y}
       \ar@{-}[dl]\ar@{-}[dr] \\
  \mbox{\tt []++y} \ar[d] 
   & & \mbox{\tt (x:xs)++y}  \ar[d]  \\
  \mbox{\tt y} & & \mbox{\tt x:(xs++y)}
}
}
\end{equation}
\noindent
Applying procedure \code{compile} to this tree produces the following output:
\newcommand{\numline}[1]{\hfill \llap{\makebox[6in]{\rm compile line \##1}}}
\equprogram{%
\label{append-compiled}%
\head([]++[]) = []                \numline{14}\\
\head([]++(y:ys)) = (y:ys)        \numline{14}\\
\head([]++y) = \head(y)           \numline{15}\\
\head((x:xs)++y) = x:(xs++y)      \numline{10}\\
\head(x++y) = \head(\head(x)++y)  \numline{04}
}
\end{example}
\noindent
Operation $\norm$ of the object code is defined by one rule
for each symbol of $R$.  In the following \emph{metarules},
$c/m$ stands for a constructor of $R$,  
$f/n$ stands for an operation of $R$,
and $x_i$ is a fresh variable for every $i$.
\equprogram{%
\label{function-norm}%
$\norm(c(x_1,\ldots x_m)) = c(\norm(x_1), \ldots \norm(x_m))$ \\
$\norm(f(x_1,\ldots x_n)) = \norm(\head(f(x_1,\ldots x_n)))$
}
\begin{example}
The rules of $\norm$ for the list constructors
and the operation ``\code{++}'' defined earlier are:
\equprogram{%
\label{norm-example}%
\norm([]) = [] \\
\norm(x:xs) = \norm(x):\norm(xs) \\
\norm(x++y) = \norm(\head (x++y))
}
\end{example}
\begin{definition}
The rewrite system consisting of the $\head$
rules generated by procedure
\code{compile} for the operations of $R$ and 
the $\norm$ rules generated according to (\ref{function-norm})
for all the symbols of $R$ is the
\emph{object code} of $R$ and is denoted $C_R$.
\end{definition}

\begin{example}
We show the computation of \code{[1]++[2]}
in both $R$ and $C_R$. We use the desugared notation for list 
expressions to more easily match the patterns of the rules of ``\code{++}''.
\equprogram{%
\label{comp-in-R}%
(1:[])++(2:[]) $\to$ 1:([]++(2:[])) $\to$ 1:2:[]
}
and
\equprogram{%
\label{comp-in-CR}%
\norm((1:[])++(2:[])) \\
\w $\to$   \norm(\head((1:[])++(2:[]))) \\
\w $\to$   \norm(1:([]++(2:[])) \\
\w $\to$   \norm(1):\norm([]++(2:[])) \\
\w $\to$   1:\norm(\head([]++(2:[]))) \\
\w $\to$   1:\norm(2:[]) \\
\w $\to$   1:\norm(2):\norm([]) \\
\w $\to$   1:2:[]
}
Computation (\ref{comp-in-CR}) is longer than (\ref{comp-in-R}).
If all the occurrences
of $\norm$ and $\head$ are ''erased'' from the states of (\ref{comp-in-CR}),
a concept formalized shortly, and repeated states of the computation
are removed, the remaining steps are the same as in (\ref{comp-in-R}).
The introduction and removal of occurrences
of $\norm$ and $\head$ in (\ref{comp-in-CR}),
which lengthen the computation,
represent the control, what to rewrite and when to stop.
These activities occur in (\ref{comp-in-R}) too,
but are in the mind of the reader
rather than explicitly represented in the computation.
\end{example}

\section{Compilation Properties}
\label{Compilation Properties}

$C_R$, the object code of $R$, correctly implements $R$.
Computations performed
by $C_R$ produce the results of corresponding computations in $R$
as formalized below.
Furthermore, $C_R$ implements a needed strategy, because every reduction
performed by $C_R$ is a needed reduction in $R$. In this section, we
prove these properties of the object code.

Let $Expr$ be the set of expressions over
the signature of $C_R$ output by \code{compile}
on input a rewrite system $R$.
The \emph{erasure} function $\erasure : Expr \to Expr$
is inductively defined by:
\equprogram{%
\label{erasure}%
$\erasure(H(t)) = \erasure(t)$ \\
$\erasure(N(t)) = \erasure(t)$ \\
$\erasure(s(t_1,\ldots t_n)) = s(\erasure(t_1),\ldots \erasure(t_n))$ \mbox{\rm for} $s/n \in \Sigma_R$
}
Intuitively, the erasure of an expression $t$ removes all the occurrences
of $\head$ and $\norm$ from $t$.
The result is an expression over the signature of $R$.

\begin{lemma}
\label{completeness}%
Let $R$ be an inductively sequential system and
$\head$ the head function of $C_R$.
For any operation-rooted expression $t$ of $R$, $\head(t)$ is a redex.
\end{lemma}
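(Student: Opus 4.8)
The plan is to prove the statement by structural induction on the definitional tree of the operation that roots $t$, exploiting the fact that \code{compile} emits, for \emph{every} node of that tree, at least one $\head$-rule whose left-hand side is $\head$ applied to the node's pattern (possibly further instantiated). Fix an operation $f/n$ of $R$ and let $\tree_f$ be its definitional tree; write $t = f(t_1,\ldots,t_n)$. I would strengthen the claim to an arbitrary subtree: for every node $N$ of $\tree_f$ with pattern $\pattern$ and every expression $u$ of $R$ that matches $\pattern$, the expression $\head(u)$ matches the left-hand side of some rule produced by \code{compile} on the subtree rooted at $N$. The lemma then follows by applying this to the root of $\tree_f$, whose pattern $f(x_1,\ldots,x_n)$ is matched by every $f$-rooted expression, and by $t$ in particular.

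For the base cases, suppose $N$ is a \emph{rule} node carrying $l \to r$, so $\pattern = l$. Whether $r$ is operation-rooted, constructor-rooted, or a variable, \code{compile} emits at least the rule with left-hand side $\head(\interpol{l})$ (lines~08, 10, and~15 respectively; the collapsing case also emits the instances at line~14, but the fallback at line~15 suffices here). Since $u$ matches $\pattern = l$, the expression $\head(u)$ matches $\head(\interpol{l})$ and is therefore a redex. If instead $N$ is an \emph{exempt} node, line~17 emits the rule with left-hand side $\head(\interpol{\pattern})$, which $\head(u)$ matches for the same reason.

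For the inductive step, let $N$ be a \emph{branch} node with inductive position $o$ and children $\tree_1,\ldots,\tree_k$, one for each constructor of the sort of the inductive variable. Here I would do a case analysis on the root symbol of the inductive subexpression $u|_o$, which is determined without any reduction. If $u|_o$ is rooted by a constructor $c$, then by well-typedness $c$ is one of the constructors enumerated by the branch, so $u$ matches the corresponding child pattern $\pattern[c(x_1,\ldots,x_m)]_o$; the inductive hypothesis applied to that strictly smaller subtree yields a matching rule. Otherwise $u|_o$ is operation-rooted or a variable, hence not constructor-rooted; then no child pattern applies, but the branch rule emitted at line~04, whose left-hand side is $\head(\interpol{\pattern})$ with a variable still standing at position $o$, matches $\head(u)$ because $u$ matches $\pattern$. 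In every case $\head(u)$ is a redex.

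I expect the only delicate point to be the exhaustiveness of the branch case: showing that the constructor/non-constructor dichotomy on $u|_o$ is complete and that the two horns are caught by the child rules and the line~04 rule respectively. This rests squarely on the defining property of definitional trees---that a branch enumerates \emph{all} constructors of the inductive variable's sort---together with the standing assumption that expressions are well typed, which forces the head of $u|_o$, when it is a constructor, to be among those enumerated. Note that the argument never inspects more than the root symbol at the inductive position, so it applies uniformly whether $u|_o$ is a value, an operation application, or a variable; rule priorities are irrelevant, since the claim only asserts that \emph{some} rule matches.
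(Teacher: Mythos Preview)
Your argument is correct, but it is considerably more elaborate than the paper's. The paper dispenses with induction entirely: since the root of $\tree_f$ has pattern $f(x_1,\ldots,x_n)$, procedure \code{compile} emits---at whichever of lines~04, 08, 10, 15, or~17 applies to that root node---a rule whose left-hand side is $\head(f(x_1,\ldots,x_n))$. That single rule already matches $\head(t)$ for \emph{every} $f$-rooted $t$, so $\head(t)$ is a redex; priorities are irrelevant because the claim is only existential.

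Your induction proves a strictly stronger statement: for each node $N$ with pattern $\pattern$, any $u$ matching $\pattern$ yields a match among the rules emitted for the subtree at $N$. That extra precision effectively tracks \emph{which} rule fires under the textual priority and is closer in spirit to what Lemma~\ref{operation-rooted} needs, but for the present lemma it is unnecessary machinery. Even within your own framework the branch-step case split on the head of $u|_o$ is superfluous: the line-04 rule emitted at $N$ has a variable at position $o$, hence matches $\head(u)$ unconditionally whenever $u$ matches $\pattern$; you never actually need the inductive hypothesis on the children to conclude redex-hood.
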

\begin{proof}
Let $f/n$ be the root of $t$, and
$\tree$ the definitional tree of $f$ input to procedure \code{compile}.
The pattern at the root of $\tree$ is $f(x_1,\ldots x_n)$,
where each $x_i$ is a variable.
Procedure \code{compile} outputs a rule of $\head$ with left-hand side
$\head(f(x_1,\ldots x_n))$.  Hence this rule, or
a more specific one, reduces $t$.
\end{proof}
\noindent
Comparing graphs modulo a renaming of nodes,
as in the next proof,
is a standard technique \cite{EchahedJanodet97IMAG}
due to the fact that any node created by a rewrite is fresh.

\begin{lemma}
\label{operation-rooted}%
Let $R$ be an inductively sequential system and
$\head$ the head function of $C_R$.
Let $t$ be an operation-rooted expression of $R$, and
$\head(t)$ be reduced by a step
resulting from the application of a rule $r$ originating
from statement 04 of procedure \code{compile}.
The argument of the inner application of $\head$ in the contractum
is both operation-rooted and needed for $t$.
\end{lemma}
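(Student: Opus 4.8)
Looking at Lemma~\ref{operation-rooted}, I need to understand what statement 04 of \code{compile} produces and then prove two things about the inner $\head$ application in its contractum: that its argument is operation-rooted, and that it is needed for $t$.

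Statement 04 fires when the matched node is a \emph{branch} node with pattern $\pattern$ and inductive node $o$. It outputs the rule
\begin{displaymath}
\head(\interpol{\pattern}) = \head(\interpol{\pattern[\head(\pattern|_o)]_o}).
\end{displaymath}
So the inner application of $\head$ in the contractum is $\head(\pattern|_o)$, where $\pattern|_o$ is the subpattern of $\pattern$ at the inductive position $o$.

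\begin{proof}[Proof plan]
The plan is to argue the two claims separately, exploiting the structure of a branch node in a definitional tree.

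First I would establish that the argument $\pattern|_o$ is operation-rooted. When $\head(t)$ is reduced by the rule from statement~04, the pattern $\pattern$ is matched against $t$, so the inductive variable $x$ of $\pattern$ is bound to the subexpression $s$ of $t$ at position $o$. The key observation is \emph{why} the branch rule from statement~04 fires rather than a more specific child rule: by the textual priority of $\head$ rules and the order in which \code{compile} emits them (children first, at line~03, via the recursive calls, before the branch rule at line~04), the branch rule applies only when none of the child rules match. Since the children of a branch node are obtained by instantiating the inductive variable with each constructor $c(x_1,\ldots,x_m)$ of its type, a child pattern matches $t$ precisely when $s$ is rooted by the corresponding constructor. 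Hence the branch rule fires only when $s$ is \emph{not} constructor-rooted, i.e., $s=t|_o$ is operation-rooted. (Here I would use Lemma~\ref{completeness} implicitly: $s$ being operation-rooted means $\head(s)$ is itself a redex, which is what makes the recursive control meaningful.)

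Second, I would show that the node rooting $s = t|_o$ is needed for $t$. By Definition~\ref{needed-node}, I must show that in any computation of $t$ to a head-constructor form, the subexpression at $o$ is derived to a head-constructor form. This follows from the semantics of definitional-tree pattern matching recalled in the Preliminaries: to reach a head-constructor form of $t$, one must apply some rule of $f$, and every rule of $f$ lives at a leaf of the definitional tree below this branch node. To descend past the branch node $N$, the pattern of some child must be matched, which requires instantiating the inductive variable, i.e., requires $t|_o$ to become rooted by one of the relevant constructors. Therefore $t|_o$ must be derived to a head-constructor form before any rule of $f$ can be applied, so the inductive position is needed for $t$.

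I expect the main obstacle to be the first claim, specifically pinning down rigorously that the branch rule fires \emph{only} when the inductive argument is operation-rooted. This depends on the priority semantics of $C_R$ (the child rules precede the branch rule textually) together with the restriction to an innermost strategy, under which priority is unambiguous because no internal reduction can expose a previously-inapplicable higher-priority rule. I would make explicit that when the branch rule is applied under an innermost strategy, $t|_o$ is already in head-normal form as far as the child rules are concerned, so its being non-constructor-rooted forces it to be operation-rooted. The neededness claim, by contrast, is a fairly direct reading of the definitional-tree matching procedure against Definition~\ref{needed-node}.
\end{proof}
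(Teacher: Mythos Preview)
Your approach mirrors the paper's proof closely: both claims are argued by the same mechanism (textual priority of child rules for operation-rootedness; the shape of definitional-tree patterns for neededness). However, your neededness argument contains a small but real gap.

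You write that ``every rule of $f$ lives at a leaf of the definitional tree below this branch node.'' This is false in general: the branch node $N$ with pattern $\pattern$ need not be the root of the tree, so rules of $f$ may sit in branches \emph{disjoint} from $N$. What you must argue is that no such rule can ever rewrite $t$ (or any descendant of $t$) at the root. The paper supplies exactly this missing step: if $n_1$ and $n_2$ are disjoint nodes of the definitional tree, their patterns differ by a \emph{constructor} at the inductive position of their nearest common ancestor, hence are non-unifiable. Since $t$ matches $\pattern$, and every proper node on the path from the root of $t$ to $o$ is constructor-labeled (so cannot be rewritten), every descendant of $t$ still matches $\pattern$ and therefore still fails to match any pattern disjoint from $N$. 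Only then can you conclude that any root step on a descendant of $t$ uses a rule below $N$, and hence requires a constructor at $o$.

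Once you add this observation, your argument coincides with the paper's. Your treatment of the first claim is fine; the paper additionally rules out the case that $t|_o$ is labeled by $\head$, but since the hypothesis states that $t$ is an expression of $R$, this case cannot arise anyway.
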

\begin{proof}
Let $\tree$ be a definitional tree of the root of $t$.
Let $\pattern$ be the pattern of the \emph{branch} node $n$
of $\tree$ from which rule $r$ originates and let $o$ be the
inductive node of $\pattern$.
Since $r$ rewrites $t$ and $\pattern$ is the left-hand side of $r$
modulo a renaming of variables and nodes,
there exists a graph homomorphism $\sigma$ such that
$t = \sigma(\pattern)$.
Our convention on the specificity of the rules defining $\head$
establishes that no rule textually preceding $r$ in the 
definition of $\head$ rewrites $t$.
Since procedure \code{compile} traverses $\tree$ in post-order,
every rule of $\head$ originating from a node descendant of $n$
in $\tree$ textually precedes $r$ in the definition of $\head$.
Let $q = \sigma(\pattern|_o)$.
For each constructor symbol $c/n$ of $R$ of the sort of $\pattern|_o$,
there is a rule of $\head$ with argument 
$\pattern[c(x_1,\ldots,x_n)]|_o$, where $x_1,\ldots,x_n$ are fresh variables,
and this rule textually precedes $r$ in the definition of $\head$.
Therefore, the label of $q$ is not a constructor symbol, otherwise
this rule would be applied to $t$ instead of $r$.
Since the step of $t$ is innermost, $q$ cannot be labeled by $\head$ either.
Thus, the only remaining possibility is that $q$ is labeled by an operation.
We now prove that $q$ is needed for $t$.
If $n_1$ and $n_2$ are disjoint nodes 
(neither is an ancestor of the other) of $\tree$,
then the patterns of $n_1$ and $n_2$ are not unifiable.
This is because they have different constructors symbols
at the node of the inductive variable of the closest
(deepest) common ancestor.
Thus, since $t=\sigma(\pattern)$, only a rule of $R$
stored in a \emph{rule} node of $\tree$ below $n$
can rewrite (a descendant of) $t$ at the root, if any such a rule exists.
All these rules have a constructor symbol at the node matched by $o$,
whereas $t$ has an operation symbol at $q$, the node matched by $o$.
Therefore, $t$ cannot be reduced (hence reduced to a head-constructor
form) unless $t|_q$ is reduced to a head-constructor form.
Thus, $q$ is needed for $t$. 
\end{proof}
\begin{example}
The situation depicted by the previous lemma can be seen in the
evaluation of \linebreak
$t = \code{([1]++[2])++[3]}$.
According to (\ref{append-compiled}),
$\head(t) \to \code{\head(\head([1]++[2])++[3])}$.
The argument of the inner application of $\head$ is 
both operation-rooted and needed for $t$.
\end{example}

\begin{lemma}
\label{simulated}%
Let $R$ be an inductively sequential system and
$\head$ the head function of $C_R$.
Let $t$ be an operation-rooted expression of $R$ and let $A$ denote
an innermost finite or infinite 
computation $\head(t) = e_0 \to e_1 \to \ldots$ in $C_R$.  
\begin{enumerate}
\item [\rm 1.]
For every index $i$ in $A$, $\erasure(e_i) \toequal \erasure(e_{i+1})$ in $R$.
\item [\rm 2.]
If $A$ terminates (it neither aborts nor is infinite)
in an expression $u$, 
then  $u$ is a head-constructor form of $R$.
\end{enumerate}
\end{lemma}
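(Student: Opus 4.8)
The plan is to treat the two parts separately: Part 1 by a direct case analysis on the $\head$-rule applied at each step, and Part 2 by establishing a structural invariant that prevents the computation from getting stuck with $\head$ applied to a non-operation.

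For Part 1, I would first note that, since $t$ is an expression of $R$ and no right-hand side produced by \code{compile} introduces $\norm$, the symbol $\norm$ never occurs in $A$; hence only the first and third clauses of the erasure~(\ref{erasure}) are ever relevant. Writing the step $e_i \to e_{i+1}$ as the reduction of a subexpression $\head(s) \to w$ inside a context $C$, so that $e_i = C[\head(s)]$ and $e_{i+1} = C[w]$, I would use that $\erasure$ is a homomorphism on the symbols of $R$ to reduce the claim to comparing $\erasure(\head(s)) = \erasure(s)$ with $\erasure(w)$. A case split on the originating \code{compile} statement then finishes it: for a \emph{branch} rule (statement 04) the contractum is $s[\head(s|_o)]_o$, whose erasure equals $\erasure(s)$ because the freshly inserted $\head$ is erased, giving the reflexive case of $\toequal$; for the \emph{rule} and \emph{collapsing} rules (statements 08, 10, 14, 15) the erased redex $\erasure(s)$ rewrites to $\erasure(w)$ by exactly the rule of $R$ (or, for statement 14, the substitution instance thereof) from which the $\head$-rule was generated, giving a single $R$-step. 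The \emph{exempt} rule (statement 17) produces \code{abort} and so contributes no step.

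For Part 2 the real work is to show that a terminating computation cannot end in a state rooted by $\head$ or by an operation of $R$. I would first record the easy structural fact that the root of every state of $A$ is either $\head$ or a constructor of $R$: the initial state $\head(t)$ is $\head$-rooted, the right-hand sides produced at statements 04, 08 and 15 are $\head$-rooted, those at statements 10 and 14 are constructor-rooted, and a reduction below the root leaves the root unchanged; in particular no state is ever rooted by an operation of $R$. The crux is then the invariant that \emph{for every subexpression $\head(s)$ occurring in any state of $A$, $\erasure(s)$ is operation-rooted}. I would prove this by induction on the steps of $A$: it holds for $\head(t)$ since $\erasure(t) = t$ is operation-rooted, and for the inductive step I would use the innermost discipline to argue that the argument $s$ of the reduced redex is in normal form, hence (combining the invariant with the observation that $\head(x)$ is a redex exactly when the root of $x$ is an operation of $R$, by \code{compile} together with Lemma~\ref{completeness}) that $s$ is in fact a pure operation-rooted expression of $R$, one containing no occurrence of $\head$. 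Inspecting the contractum of each rule then shows the invariant is preserved: the inner $\head$ created by a \emph{branch} rule is applied to $s|_o$, which is operation-rooted by Lemma~\ref{operation-rooted}; the $\head$ created at statement 08 wraps the operation-rooted right-hand side; and the $\head$ created at statement 15 wraps a variable binding, which is a subterm of the pure expression $s$ and, in that branch, is not constructor-rooted (otherwise a higher-priority statement-14 rule would apply), hence operation-rooted.

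Finally I would combine these facts on the terminal state $u$. By the structural fact, $u$ is rooted by $\head$ or by a constructor of $R$; in the latter case $u$ is already a head-constructor form and we are done. If instead $u = \head(s)$, then since $u$ is a normal form $s$ is a normal form and $\head(s)$ matches no rule, so the root of $s$ is not an operation of $R$; but the invariant gives that $\erasure(s)$ is operation-rooted, and a short descent through any leading chain of $\head$s (each of which, being a normal form, must be applied to a non-operation, so ultimately to a constructor of $R$) contradicts this. Hence $u$ is constructor-rooted. I expect this last gluing step---and in particular pinning down precisely why an innermost normal form of the shape $\head(s)$ cannot arise---to be the main obstacle, since it is exactly here that the interplay among the innermost strategy, Lemma~\ref{completeness}, and Lemma~\ref{operation-rooted} is essential.
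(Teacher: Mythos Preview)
Your proposal is correct.  For Part~1 your case analysis on the originating \code{compile} statement is exactly the paper's argument, only spelled out in slightly more detail (the paper groups 08, 10, 14, 15 together and handles 04 in one line).

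For Part~2 you take a genuinely different route.  The paper's proof is a single short paragraph: it looks only at the \emph{last} step of $A$ and argues that this step cannot come from statements 04, 08 or 15, ``since another step would become available,'' hence it comes from 10 or 14 and the contractum is constructor-rooted.  What the paper leaves implicit is precisely the point you isolate as the obstacle: to know that ``another step would become available'' after a rule from 04 or 15, one needs that the argument of the innermost $\head$ is a pure operation-rooted expression of $R$ (so that Lemma~\ref{operation-rooted} applies, respectively so that the priority over the statement-14 rules forces the variable match to be operation-rooted rather than $\head$-rooted).  The paper also does not say why the last step is at the root.  Your explicit invariant---every occurrence of $\head(s)$ in any state has $\erasure(s)$ operation-rooted---together with your finite descent through a leading chain of $\head$s supplies exactly these missing justifications.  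So the two arguments reach the same conclusion by the same underlying mechanism, but the paper compresses the invariant into the phrase ``another step would become available,'' while you prove it.  Your version is longer but self-contained; the paper's is terse and relies on the reader reconstructing the purity of the innermost $\head$-argument.
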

\begin{proof}
Claim 1:
Let $l \to r$ be the rule of $\head$ applied in the step $e_i \to e_{i+1}$.
There are 3 cases for the origin of $l \to r$.
If $l \to r$ originates from statement 04 of \code{compile},
then $\erasure(e_i) = \erasure(e_{i+1})$ and the claim holds.
Otherwise $l \to r$ originates from one of statements 08, 10, 14 or 15.
In all these cases, a subexpression of $e_i$ of the form $\head(w)$
is replaced by either $\head(u)$ (statements 08 and 15)
or $u$ (statements 10 and 14),
in which $w$ is an instance of the left-hand side of a rule of $R$
and $u$ is the corresponding right-hand side.  Thus, in this
case too, the claim holds.
\\
\COMMENT{Consider the following computation 
\equprogram{
$\head($([1]++[2])++[3]$)$ \\
$~~\to \head(\head($([1]++[2])++[3]$)$ \\
$~~\to \head(\head(\head($([1]++[2])++[3]$)$ \\
$~~\to \head(\head($(1:([]++[2]))++[3]$)$ 
}
Now, $\head$ of a cons is unintended and no constructor form is reachable. \\
If reductions are innermost, the problem goes away.
}
Claim 2:
If $A$ aborts or does not terminate, the claim is vacuously true.
So, consider the last step of $A$.
This step cannot originate from the application of a rule
that places $\head$ at the root of the contractum,
since another step would become available.
Hence the rule of the last step is generated by statement 10 or
14 of procedure \code{compile}.
In both cases, the contractum is a head constructor form.
\end{proof}
\noindent
If $A$ denotes a computation $\head(t) = e_0 \to e_1 \to \ldots$ in $C_R$,
then, by Lemma \ref{simulated}, 
we denote $\erasure(e_0) \toequal \erasure(e_1) \toequal \ldots$
with $\erasure(A)$ and---with a slight abuse---we regard
it as a computation in $R$.
Some expression of $\erasure(A)$ may be a repetition of the previous one,
rather than the result of a rewrite step.  However, it is more practical to
silently ignore these duplicates than filtering them out at
the expenses of a more complicated definition.
We will be careful to avoid an infinite repetition of the same expression.
We extend the above viewpoint to computations of $\norm(t)$ in $C_R$,
where $t$ is any expression of $R$.

\begin{theorem}
\label{main}
Let $R$ be an inductively sequential system and
$\head$ the head function of $C_R$.
Let $t$ be an operation-rooted expression of $R$ and let $A$ denote
an innermost finite or infinite 
computation $\head(t) = e_0 \to e_1 \to \ldots$ in $C_R$.
Every step of $\erasure(A)$ is needed.
\end{theorem}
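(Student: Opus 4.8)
The plan is to show that every step in $\erasure(A)$ corresponds to the reduction of a needed redex of $R$, by tracing how each step of the $C_R$-computation $A$ contributes (or fails to contribute) to $\erasure(A)$, and then identifying the reduced redex as needed via the transitivity lemma. First I would invoke Lemma~\ref{simulated}(1): for each index $i$, either $\erasure(e_i) = \erasure(e_{i+1})$ (when the applied rule originates from statement~04, which merely pushes an inner $\head$) or $\erasure(e_i) \to \erasure(e_{i+1})$ is a genuine rewrite step in $R$. Only the latter kind contributes an actual step to $\erasure(A)$, so it suffices to prove that each such contributing step reduces a needed redex. These contributing steps arise from statements 08, 10, 14, or 15, where a subexpression $\head(w)$ is rewritten using an $R$-rule $w \to u$; the redex reduced in $\erasure(A)$ is exactly the one rooted at the node carrying $w$.

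The key step is to argue that this redex is needed for the outermost operation-rooted subexpression of the original expression $t$. Since the computation is innermost, at the moment $\head(w)$ is reduced, $w$ is operation-rooted with all its proper subexpressions already in the appropriate form, and the surrounding context is a nested chain of $\head$-applications built up by prior statement-04 steps. I would track this chain of nestings: each statement-04 step wraps $\head$ around an inductive-node subexpression that, by Lemma~\ref{operation-rooted}, is both operation-rooted and needed for the enclosing operation-rooted expression. Thus the node being reduced sits at the bottom of a chain $n_k$ needed for $n_{k-1}$, needed for $\ldots$, needed for $n_0$, where $n_0$ is the root of $t$. Applying Lemma~\ref{need-transitivity} repeatedly collapses this chain to conclude that the reduced node is needed for $t$ itself.

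The main obstacle will be making the bookkeeping of the nested $\head$-chain precise and justifying that Lemma~\ref{operation-rooted} applies at each level. Concretely, I must verify that whenever a statement-04 rule fires, its inner $\head$ is applied to a subexpression that is operation-rooted and needed \emph{for the immediately enclosing operation-rooted subexpression}, and that the relevant enclosing subexpression is itself operation-rooted at that moment so that Lemma~\ref{operation-rooted} is applicable. Since the innermost strategy guarantees that no internal reduction interferes and that each $\head(w)$ is reduced only when $w$ is fully operation-rooted, I expect the chain structure to be maintained as an invariant of the computation. I would formalize this as an induction on the depth of nesting of $\head$-applications in $e_i$, showing that each newly introduced inner $\head$ targets a node needed for its parent, and then combine the individual neededness facts through Lemma~\ref{need-transitivity} to obtain neededness for $t$. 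The final step is routine: a node needed for $t$, being the root of a redex, yields a needed redex, and by Definition~\ref{needed-node} together with Lemma~\ref{extension} this is a needed step of $\erasure(A)$ as a computation in $R$.
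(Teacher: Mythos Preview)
Your plan is essentially the paper's: maintain the invariant that every argument of an application of $\head$ in a state $e_i$ is needed for $\erasure(e_i)$, justify each newly introduced inner $\head$ via Lemma~\ref{operation-rooted}, and chain these local facts together with Lemma~\ref{need-transitivity}. Two points need tightening before the argument goes through.

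First, the chain should be anchored at $\erasure(e_i)$, not at ``the root of $t$''. After a statement-08 or statement-15 step the argument of the outermost $\head$ is replaced by a contractum whose root is, in general, a fresh node; so your sequence $n_k,\ldots,n_0$ does not stay pinned to the original root of $t$ across the computation. What you actually need (and what the paper proves) is that each $\head$-argument node is needed for the \emph{current} erasure.

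Second, an induction on nesting depth within a single $e_i$ is not enough on its own: you also have to carry the invariant across steps, in particular across statement-08/15 steps, which swap the innermost $\head$-argument for an operation-rooted contractum without introducing a new $\head$. The paper packages both dimensions into one well-founded induction on pairs (state index, proper-subexpression containment), which simultaneously handles ``same node in an earlier state'' and ``deeper node in the same state''. Your depth induction recovers the second component, but you should either add an outer induction on the step index or adopt the combined ordering.
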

\begin{proof}
We prove that for every index $i$ such that $e_i$ is a state
of $A$, every argument of an
application of $\head$ in $e_i$ is needed for $\erasure(e_i)$.
Preliminarily, we define a relation ``$\prec$'' on the nodes of the
states of $\erasure(A)$ as follows. 
Let $p$ and $q$ be nodes of 
states $\erasure(e_i)$ and $\erasure(e_j)$ of $\erasure(A)$ respectively. We
define $p \prec q$ iff $i<j$ or $i=j$ and the expression at $q$
is a proper subexpression of the expression at $p$ in $\erasure(e_i)$.
Relation ``$\prec$'' is a well-founded ordering 
with minimum element the root of $t$.
The proof of the theorem is by induction on ``$\prec$''.
Base case: Directly from the definition of ``need'',
since $t$ is rooted by an operation of $R$.
Induction case:
Let $q$ be the root of the argument of an application
of $\head$ in $e_j$ for $j>0$.
We distinguish whether $q$ is the root
of the argument of an application of $\head$ in $e_{j-1}$.
If it is, then the claim is a direct consequence of the
induction hypothesis.
If it is not, $e_{j-1} \to e_j$ is an application of a rule $r$ generated
by one of the statements 04, 08 or 15 of procedure \code{compile}.
For statement 04, there is a node $p$ of $\erasure(e_j)$ that
by the induction hypothesis is needed for $\erasure(e_j)$
and matches the pattern $\pattern$
of the branch node of a definitional tree from which rule $r$ originates.
Let $q$ be the node of the subexpression of $e_j$ rooted by $p$
matched by $\pattern$ at $o$.
By Lemma \ref{operation-rooted}, $q$ is needed for $p$.
Since $p$ is needed for $\erasure(e_j)$, by Lemma \ref{need-transitivity},
$q$ is needed for $\erasure(e_j)$ and the claim holds.
For statements 08 and 15, $q$ is the root of the contractum of the
redex matched by $r$ which by the induction hypothesis
is needed for $\erasure(e_{j-1})$.
Node $q$ is still labeled by an operation, hence it is needed 
for $\erasure(e_j)$ directly by the definition of ``need''.
\end{proof}

\begin{corollary}
Let $R$ be an inductively sequential system.
Let $t$ be an expression of $R$ and let $A$ denote
an innermost finite or infinite 
computation $\norm(t) = e_0 \to e_1 \to \ldots$ in $C_R$.
Every step of $\erasure(A)$ is needed.
\end{corollary}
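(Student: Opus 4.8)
The plan is to reduce the statement to Theorem~\ref{main} by isolating, inside the $\norm$-computation $A$, the $\head$-subcomputations to which that theorem already applies. First I would classify the rules that can fire in $A$. By the convention preceding the theorem, a repeated state of $\erasure(A)$ is not a genuine step, so it suffices to treat the steps that actually change the erasure. The two $\norm$ rules of~(\ref{function-norm}) leave the erasure unchanged, since $\erasure(\norm(c(\ldots)))=\erasure(c(\norm(\ldots)))$ and $\erasure(\norm(f(\ldots)))=\erasure(\norm(\head(f(\ldots))))$ by the defining clauses of $\erasure$ in~(\ref{erasure}); likewise, as recorded in the proof of Lemma~\ref{simulated}, a $\head$-step originating from statement~04 is erasure-invariant. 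Hence every genuine step of $\erasure(A)$ is a $\head$-step originating from statement 08, 10, 14, or 15, i.e.\ the application of a rule of $R$ to the argument of some occurrence of $\head$, and the reduced node of $\erasure(A)$ is the root of that argument.

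Next I would establish the localization. Because the first $\norm$ rule of~(\ref{function-norm}) descends only through $R$-\emph{constructors} while the second immediately wraps an $R$-\emph{operation}-rooted subexpression in $\head$, and because an innermost strategy cannot fire $\norm$ on a $\head$-rooted argument, every occurrence of $\head$ introduced by $\norm$ is \emph{governing}: the path from the root of the current state to it runs through $R$-constructors and occurrences of $\norm$ only. Such a governing $\head$ launches a subcomputation $\head(g(\ldots))\to\cdots$ that is carried out in isolation under the innermost strategy and, when finite, ends in a head-constructor form (Lemma~\ref{simulated}, Claim~2), after which the enclosing $\norm$ resumes. Since $\erasure$ deletes exactly $\norm$ and $\head$ but keeps $R$-constructors, the erased argument of a governing $\head$ lies in the erased state below constructor symbols alone; being operation-rooted, it is an \emph{outermost} operation-rooted subexpression in the sense of Definition~\ref{needed-node}. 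I shall call this erased argument the \emph{focus} of the subcomputation.

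Finally I would apply the theorem locally. Every genuine step identified above fires a $\head$-rule inside some governing subcomputation (possibly at an inner $\head$ produced by statement~04, whose argument is still contained in the focus). Such a subcomputation has exactly the shape required by Theorem~\ref{main} — it begins with $\head$ applied to an operation-rooted expression of $R$ and proceeds by innermost $C_R$-steps that leave the surrounding, erasure-transparent constructor-and-$\norm$ context untouched — so the theorem yields that the argument of the firing $\head$ is needed for the focus. Because the focus is an outermost operation-rooted subexpression of the global erased state, Definition~\ref{needed-node} upgrades this to: the reduced node is needed; as this holds for every genuine step of $\erasure(A)$, the claim follows. The main obstacle I anticipate is precisely this localization: one must justify that a $\head$-subcomputation embedded in the $\norm$-context is a faithful instance of Theorem~\ref{main}, so that neededness established for the isolated focus transfers to neededness for that focus viewed as an outermost operation-rooted subexpression of the whole state. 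This rests on two facts deserving explicit argument — that the enclosing context is reductionally inert under the innermost strategy and transparent to $\erasure$ — together with the observation that the inner $\head$s created by statement~04 never escape the focus.
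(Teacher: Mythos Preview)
Your proposal is correct and follows essentially the same approach as the paper: reduce to Theorem~\ref{main} by observing that $\norm$ applies $\head$ exactly at the outermost operation-rooted subexpressions, so that Definition~\ref{needed-node} lifts the local neededness guaranteed by the theorem to global neededness. The paper's own proof is a three-sentence sketch of this same idea and does not spell out the localization concerns you articulate; your version is a more careful elaboration rather than a different argument.
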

\begin{proof}
Operation $\norm$ of $C_R$ applied to an expression $t$ of $R$
applies operation $\head$ to every outermost operation-rooted
subexpression of $t$.  All these expressions are needed by
Def.~\ref{needed-node}.
The claim is therefore a direct consequence of Th.~\ref{main}.
\end{proof}

\begin{corollary}
\label{equivalence}
Let $R$ be an inductively sequential system.
For all expressions $t$ and constructor forms $u$ of $R$,
$t \tostar u$ in $R$ if, and only if, $\norm(t) \tostar u$ in $C_R$
modulo a renaming of nodes.
\end{corollary}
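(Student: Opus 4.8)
The statement is an equivalence, so the plan is to prove the two implications separately, with the backward one amounting to bookkeeping and the forward one carrying the weight. For the backward direction I would assume $\norm(t) \tostar u$ in $C_R$ with $u$ a constructor form of $R$ and simply erase the computation. Writing it as $\norm(t) = e_0 \to \cdots \to e_k = u$, the extension of Lemma~\ref{simulated} to $\norm$-computations (announced just before the statement) gives $\erasure(e_i) \toequal \erasure(e_{i+1})$ for every $i$, since the two $\norm$-rules and rule~04 erase to identities while rules~08, 10, 14 and 15 erase to genuine $R$-steps. Because $t$ is an $R$-expression we have $\erasure(\norm(t)) = \erasure(t) = t$, and $\erasure(u) = u$ as $u$ is a constructor form; discarding the duplicate states from the $\toequal$-chain then yields $t \tostar u$ in $R$.

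For the forward direction I would assume $t \tostar u$ with $u$ a constructor form, so that $t$ has a normal form and, by normalization of the needed strategy (Sect.~\ref{Introduction}), every needed computation of $t$ terminates at the unique normal form $u$. I would then take an arbitrary innermost computation $A$ of $\norm(t)$ and argue that it terminates, without aborting, in $u$. No step of $A$ can abort: an abort arises from rule~17, where $\head(s)$ matches an \emph{exempt} pattern, so $s$ is operation-rooted and has no head-constructor form; but by Theorem~\ref{main} (via the preceding corollary) $s$ is needed for $\erasure(e_i)$, and since every $\erasure(e_i)$ normalizes to $u$ (needed reductions preserve the normal form) $s$ would have to reach a head-constructor form, a contradiction. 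Moreover, again by the preceding corollary the non-duplicate steps of $\erasure(A)$ form a needed computation of $t$, which by normalization is finite, so $A$ contains only finitely many \emph{erasure-changing} steps (rules~08, 10, 14, 15).

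The remaining, and in my view hardest, point is to rule out an infinite tail of \emph{control} steps (rule~04 and the two $\norm$-rules), which leave $\erasure$ fixed. I would do this with a well-founded measure that strictly decreases on every control step under the innermost strategy: rule~04 pushes an inner $\head$ strictly deeper into the finite definitional tree of the current operation; the rule $\norm(c(\bar x)) = c(\norm(\bar x))$ drives $\norm$ toward the leaves of the finite constructor skeleton already produced; and the rule $\norm(f(\bar x)) = \norm(\head(f(\bar x)))$ is, by the innermost discipline and Lemma~\ref{completeness}, immediately followed by $\head$-activity on $f(\bar x)$ that is either erasure-changing or a rule~04 descent. A lexicographic or multiset combination of these quantities then forbids an infinite run of control steps, so $A$ terminates.

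Once termination is secured, $A$ ends in a $C_R$-normal form $w$; as in the $\norm$-analogue of Lemma~\ref{simulated}(2), $w$ carries no residual $\head$ or $\norm$ and every operation of $R$ has been reduced, so $w$ is a constructor form of $R$. Then $\erasure(A)$ is a needed computation $t \tostar w$ ending at a value, and by uniqueness of the normal form reached by the needed strategy $w$ equals $u$ modulo a renaming of nodes. This gives $\norm(t) \tostar u$ in $C_R$ and closes the equivalence. I expect the termination of control in the third paragraph to be the real obstacle; the no-abort step and the final identification $w = u$ are comparatively routine once neededness (Theorem~\ref{main}) and normalization are in hand.
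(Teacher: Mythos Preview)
Your proposal is correct, and in the forward direction it takes a genuinely different route from the paper.

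For the backward implication you agree with the paper: erase the $C_R$-computation via Lemma~\ref{simulated} (extended to $\norm$) and collapse the trivial steps.

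For the forward implication the paper argues by contrapositive. It fixes an innermost computation $A$ of $\norm(t)$ and shows that if $A$ aborts or fails to terminate then $t$ has no constructor normal form. The abort case matches yours. For non-termination the paper asserts, with only the phrase ``since redexes are innermost'' as justification, that $\erasure(A)$ contains infinitely many genuine $R$-steps; it then passes to the orthogonal term system via complete tree unraveling and invokes hypernormalization of the needed strategy \cite[\S9.2.2, \S13.2.9]{Terese03} to conclude that $t$ cannot have a normal form.

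You instead argue directly: assuming $t \tostar u$, you bound the erasure-changing steps of $A$ (normalization of the needed strategy, stated in Sect.~\ref{Introduction}) and then bound the control steps by an explicit well-founded measure, concluding that $A$ terminates in some value which, by uniqueness of normal forms, is $u$. This buys you two things. First, you avoid the external machinery of tree unraveling and hypernormalization. Second, you make explicit the very point the paper hides behind ``since redexes are innermost'': that an infinite $A$ cannot eventually consist solely of control steps. That is precisely the content of your third paragraph.

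One small imprecision there: for rule~04 the decreasing quantity is not depth in the definitional tree of the current operation (the operation at the active $\head$ changes at every such step), but rather the depth, in the fixed erasure, of the root of the argument of the innermost $\head$. Under the innermost discipline each rule-04 step places the new active $\head$ at a strictly deeper node of a finite acyclic graph whose erasure is unchanged, so the descent terminates. With that correction your lexicographic measure goes through.
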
\begin{proof}
Let $A$ denote some innermost computation of $\norm(t)$.
Observe that if $A$
terminates in a constructor form $u$ of $R$, then every
innermost computation of $\norm(t)$ terminates in $u$
because the order of the reductions is irrelevant.
Therefore, we consider whether $A$ terminates normally.
Case 1: $A$ terminates normally.
If $\norm(t) \tostar u$, then by Lemma \ref{simulated}, point 1,
$t \tostar u$. 
Case 2: $A$ does not terminate normally.
We consider whether $A$ aborts.
Case 2a: $A$ aborts.
Suppose $\norm(t)=e_0 \to e_1 \to \ldots \to e_i$, and
the step of $e_i$ reduces a redex $r$ to ``abort''.
By Theorem~\ref{main}, $r$ is needed for $e_i$, but there is
no rule in $R$ that reduces $r$, hence $t$ has no
constructor form.
Case 2b: $A$ does not terminates.
Every step of $\erasure(A)$ is needed.
The complete tree unraveling \cite[Def. 13.2.9]{Terese03}
of the rules of $R$ and the states of $\erasure(A)$, gives
an orthogonal \emph{term} rewriting system and a
computation of the unraveled $t$.
Since redexes are innermost, in this computation
an infinite number of needed redexes are reduced.  
The hypernormalization of the needed strategy
\cite[Sect. 9.2.2]{Terese03} shows that
hence $t$ has no constructor form.
\end{proof}
\noindent
The object code $C_R$ for a rewrite system $R$ is subjectively simple.
Since innermost reductions suffice for the execution,
operations $\head$ and $\norm$ can be coded as functions
that take their argument by-value. 
This is efficient in most programming languages.
Corollary \ref{equivalence}, in conjunction with Theorem \ref{main},
shows that $C_R$ is a good object code: it produces the value of an
expression $t$ when $t$ has such value, and it produces
this value making only steps that must be made by any rewrite computation.
One could infer that there cannot be a substantially
better object code, but this is not true.
The next section discusses why.

\section{Transformation}
\label{Transformation}

We transform the object code to avoid totally, or partially,
constructing certain contracta.  
The transformation consists of two phases.

The first phase replaces certain rules of $\head$.
Let $r$ be a rule of $\head$ in which
$\head$ is recursively applied to a variable, say $x$,
as in the third rule of (\ref{append-compiled}).
Rule $r$ is replaced by the set $S_r$ of rules obtained as follows.
A rule $r\!_f$ is in $S_r$, iff $r\!_f$ is obtained from $r$
by instantiating $x$ with $f(x_1, \ldots x_n)$, where
$f/n$ is an operation of $R$,
$x_1, \ldots x_n$ are fresh variables,
and the sorts of $f(x_1, \ldots x_n)$ and $x$ are the same.
If a rule in $S_r$ still applies $\head$ to another variable,
it is again replaced in the same way.

\begin{example}
The following rule originates from instantiating
\code{y} for ``\code{++}'' in the third rule of (\ref{append-compiled}).
\equprogram{%
\label{phase-1-example}%
\head([]++(u++v)) = \head(u++v)
}
\end{example}
\vspace*{-3ex}
\noindent
The first phase of the transformation 
ensures that $\head$ is always applied to an
expression rooted by some operation $f$ of $R$.
The second phase introduces, for each operation $f$ of $R$,
a new operation, denoted $\headcomp{f}$.
This operation is the composition of $\head$ with $f$,
and then replaces every occurrence of the composition of $\head$ with $f$
with $\headcomp{f}$.

\begin{example}
The second phase transforms (\ref{phase-1-example}) into:
\equprogram{%
\label{phase-2-example}%
\headcomp{\code{\small ++}}([],u++v) = \headcomp{\code{\small ++}}(u,v)
}
\end{example}
\noindent
After the second phase, operation $\head$ can be eliminated
from the object code since it is no longer invoked.
We denote the transformed $C_R$ with $T_R$ and the outcome
of the first phase on $C_R$ with $C'_R$.
The mapping $\tau$, from expressions of $C_R$ to expressions
of $T_R$, formally defines the transformation:
\begin{equation}
\tau(t) = \left\{
\begin{array}{@{}l l@{}}
\headcomp{f}(\tau(t_1),\ldots \tau(t_n)), \quad 
  & \mbox{if $t=\head(f(t_1,\ldots t_n))$;}\\
s(\tau(t_1),\ldots \tau(t_n)),
  & \mbox{if $t=s(t_1,\ldots t_n)$, with $s$ symbol of $R$;}\\
v, & \mbox{if $t=v$, with $v$ variable.}
\end{array}
\right.
\end{equation}
$T_R$ is more efficient than $C_R$
because, for any operation $f$ of $R$,
the application of \headcomp{f} avoids the allocation of
a node labeled by $f$.
This node is also likely to be pattern matched later.

\begin{example}
Consider the usual length--of--a--list operation:
\equprogram{%
\label{length-source}%
length [] = 0 \\
length (\us:xs) = 1+length xs
}
The compilation of (\ref{length-source}),
where we omit rules irrelevant to the point we are making,
produces:
\equprogram{%
\label{length-compiled}%
\head(length([])) = 0 \\
\head(length(\us{}:xs)) = \head(1+length(xs)) \\
$\cdots$
}
The transformation of (\ref{length-compiled}),
where again we omit rules irrelevant to the point we are making,
produces:
\equprogram{%
\label{length-transformed}%
\headcomp{\code{\small length}}([]) = 0 \\
\headcomp{\code{\small length}}(\us{}:xs)) = \headcomp{\code{\small +}}(1,length(xs)) \\
$\cdots$
}
Below, we show the traces of a portion of the computations of~ 
\code{\norm(length\,[7])} executed by $C_R$ (left) and $T_R$ (right),
where the number \code{7} is an irrelevant value.
The rules of ``\code{+}'' are not shown.
Intuitively, they evaluate the arguments to numbers,
and then perform the addition.

\begin{center}
\renewcommand{\arraystretch}{1.2}
\begin{tabular}{@{} l @{\hspace*{3em}} |  @{\hspace*{3em}} l @{}}
\code{\norm(\head(length\,[7])}
  & \code{\norm(\headcomp{\code{\small length}}([7])} \\
\w $\to$ \code{\norm(\head(1\uline{+}length\,[])}
  & \w $\to$ \code{\norm(\headcomp{\code{\small +}}(1,length\,[])} \\
\w $\to$ \code{\norm(\head(1+\head(length\,[]))}
  & \w $\to$ \code{\norm(\headcomp{\code{\small +}}(1,\headcomp{\code{\small length}}([])))} \\
\w $\to$ \code{\norm(\head(1+0))}
  & \w $\to$ \code{\norm(\headcomp{\code{\small +}}(1,0))} \\
\w $\to$ \code{\norm(1)}
  & \w $\to$ \code{\norm(1)}\\
\w $\to$ \code{1}
  & \w $\to$ \code{1}
\end{tabular}
\end{center}

\noindent
$C_R$ constructs the expression
rooted by the underlined occurrence of ``\code{+}'',
and later pattern matches it.
The same expression
is neither constructed nor pattern matched by $T_R$.
\end{example}

\medskip
\noindent
The transformation increases the size of a program.  
Certain rules are replaced by sets of rules.
The number of rules in a replacing set is the number of constructors
of some type.
A coarse upper bound of the size of the transformed program 
is a constant factor of the size of the original program.
Modern computers have gigabytes of memory.
We believe that the growth in size could become a problem only in
extreme cases, and likely would not be the most serious problem in those
cases.

\section{Transformation Properties}
\label{Transformation Properties}

We show that both phases of the transformation described in the
previous section preserve the object code computations.

\begin{lemma}
Let $R$ be an inductively sequential system.
Every step of $C_R$ is a step of $C'_R$ and vice versa,
modulo a renaming of nodes.
\end{lemma}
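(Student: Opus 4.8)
The plan is to reduce the lemma to a rule-by-rule comparison, exploiting that $C'_R$ differs from $C_R$ only through the first phase of the transformation, which rewrites \emph{in place} each rule $r$ of $\head$ whose right-hand side applies $\head$ to a variable $x$ — these are exactly the rules emitted by statement 15 of \code{compile} for a collapsing rule, such as the third rule of (\ref{append-compiled}). Such an $r$ is replaced by the set $S_r$ of its instances $r\!_f = r[f(x_1,\ldots x_n)/x]$, one for each operation $f/n$ of $R$ of the sort of $x$, occupying the textual position $r$ had; every other rule is shared verbatim. Since all claims are stated for an innermost strategy, the subtle priority semantics collapse to ``the first matching rule applies'' (as already noted in Sect.~\ref{Compilation}), so it suffices to show, at the position of each replaced rule, that $r$ in $C_R$ and $S_r$ in $C'_R$ reduce the same redexes to the same contracta modulo a renaming of nodes.

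The backward inclusion is the straightforward one. Every $r\!_f \in S_r$ is a substitution instance of $r$, so any redex reduced by $r\!_f$ is reduced by $r$ to the very same contractum, the freshly created nodes being identified modulo renaming as is standard \cite{EchahedJanodet97IMAG}. If $r\!_f$ is the first match in $C'_R$, then $r$'s pattern matches the redex in $C_R$; and no earlier rule matches, since the earlier rules are shared and any earlier rule that was itself replaced matches a normal-form argument exactly when one of its instances does. Hence $r$ is the first match in $C_R$, and every step of $C'_R$ is a step of $C_R$.

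For the forward inclusion, a step by an unchanged rule transfers because each $r\!_f$ is an instance of $r$: replacing $r$ by $S_r$ never creates an \emph{earlier} match where $r$ had none, so the ``first match'' enabling an unchanged rule in $C_R$ is still the first match in $C'_R$. The only genuine case is a step applying a replaced rule $r$ to a redex $\head(\sigma(l))$ with $x$ bound to a subexpression $e$. Because $r$ is the first match, no rule originating from statement 14 applies, and since statement 12 loops over \emph{every} constructor of the sort of $x$, the expression $e$ is not rooted by a constructor of $R$. I then claim that $e$ is rooted by an operation of $R$, whence some $r\!_f \in S_r$ matches $\head(\sigma(l))$ and rewrites it to $\head(e)$ — precisely the contractum produced by $r$, again modulo a renaming of nodes.

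The crux, and the step I expect to be the main obstacle, is establishing that $e$ is operation-rooted rather than $\head$- or $\norm$-rooted; this is exactly where innermost evaluation enters. Since the contracted redex is $\head(\sigma(l))$, in an innermost computation its proper subexpression $e$ is already a normal form. By Lemma \ref{completeness} every operation-rooted expression of $R$ makes $\head$ reducible, and the metarules (\ref{function-norm}) make every $\norm$-application reducible; therefore a normal form over the signature of $C_R$ can contain no occurrence of $\head$ or $\norm$ and is thus an expression of $R$. Being an expression of $R$ that is not constructor-rooted, $e$ is rooted by an operation of $R$, which discharges the claim and completes the forward inclusion. Combining the two inclusions yields the coincidence of the step relations modulo a renaming of nodes.
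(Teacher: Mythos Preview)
There is a genuine gap. You assert that the rules replaced by phase~1 are ``exactly the rules emitted by statement~15 of \code{compile}'', but this is false: the rules emitted by statement~04 also apply $\head$ recursively to a variable. In the right-hand side $\head(\pattern[\head(\pattern|_o)]_o)$ of a statement-04 rule, the inner $\head$ is applied to $\pattern|_o$, the inductive variable of the branch --- see the last rule of~(\ref{append-compiled}), $\head(x\code{++}y)=\head(\head(x)\code{++}y)$. Phase~1 therefore replaces these rules too, and your forward inclusion must account for them.

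Your argument for the statement-15 case (the statement-14 rules exhaust the constructors of the sort, so the match cannot be constructor-rooted) is correct and coincides with the paper's reasoning for that case. But for a statement-04 rule the textually preceding rules are not the statement-14 rules; they are the rules produced by the recursive calls at statement~03 on the children of the branch node. Establishing that the match of the inductive variable is operation-rooted in this case is precisely the content of Lemma~\ref{operation-rooted}, which the paper's proof invokes explicitly. Without handling this case you have not shown that every $C_R$-step using a replaced rule is simulated by some rule in the replacing set $S_r$. Your innermost argument excluding $\head$- and $\norm$-rooted matches carries over to statement-04 rules once you treat them --- indeed, the proof of Lemma~\ref{operation-rooted} uses the same idea to exclude $\head$ as a root label --- so the fix is to add the statement-04 case and appeal to that lemma.
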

\begin{proof}
Every rule of $C'_R$ is an instance of a rule of $C_R$.
Hence every step of $C'_R$ is a step of $C_R$.
For the converse, 
let $t \to u$ be a step of $C_R$ where some rule $r$ is applied.
It suffices to consider the case in which $t$ is the redex and 
the rule $r$ applied in the step is not in $C'_R$.
Let $v$ be the variable ``wrapped'' by $\head$ in $r$.
Rule $r$ is output by statement either 04 or 15 of procedure
\code{compile}.
We show that in both cases the match of $v$, say $s$,
is an operation-rooted subexpression of $t$.
If $r$ is output by statement 04, this
property is ensured by Lemma \ref{operation-rooted}.
If $r$ is output by statement 15, and the match of $v$ were
constructor-rooted, then some rule output by statement
14 of procedure \code{compile}, which textually precedes $r$
and is tried first, would match $t$.
Therefore, let $f/n$ be the root of $s$.
By the definition of phase 1 of the transformation,
rule $r[f(x_1,\ldots x_n)/v]$ is in $C'_R$.
Therefore, modulo a renaming of nodes, $t \to u$ in $C'_R$
\end{proof}

\begin{corollary}
\label{phase-2}
Let $R$ be an inductively sequential system.
For every operation-rooted expression $t$
and head-constructor form $u$ of $R$,
$\head(t) \toplus u$ in $C'_R$ if, and only if,
$\tau(\head(t)) \toplus u$ in $T_R$
modulo a renaming of nodes.
\end{corollary}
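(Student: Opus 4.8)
The plan is to show that the map $\tau$ is a \emph{bisimulation} between $C'_R$ and $T_R$: a single rewrite step on the $C'_R$ side corresponds to a single step on the $T_R$ side, and conversely, with $\tau$ relating the two states. Granting this, the corollary follows by induction on the length of the derivation once we check the endpoints. First, when $t = f(t_1,\ldots,t_n)$ we have $\tau(\head(t)) = \headcomp{f}(\tau(t_1),\ldots,\tau(t_n))$, which is exactly the left-hand side of the stated computation in $T_R$. Second, $u$ is a head-constructor form of $R$, hence contains no occurrence of $\head$, so $\tau(u) = u$; thus a derivation ending in $u$ on one side corresponds to one ending in $u$ on the other. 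Since $\head(t)$ is a redex by Lemma~\ref{completeness}, both computations consist of at least one step, matching $\toplus$.

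To set up the bisimulation I would first restrict $\tau$ to the \emph{good} expressions, those in which every occurrence of $\head$ is applied to an operation-rooted expression of $R$; on these the three clauses defining $\tau$ are exhaustive, and $\tau$ is a bijection modulo renaming of nodes onto the expressions of $T_R$, with inverse obtained by replacing each $\headcomp{f}(\ldots)$ with $\head(f(\ldots))$. The preceding lemma (phase~1) guarantees that every left-hand side of a rule of $C'_R$ has the form $\head(f(\ldots))$ whose arguments are built from constructors of $R$ and variables and hence contain no $\head$; thus $\tau$ maps it to $\headcomp{f}(\ldots)$, the left-hand side of the corresponding rule of $T_R$. By construction, phase~2 produces $T_R$ by applying $\tau$ to every rule of $C'_R$, so the rule sets are in the bijection $\rho \mapsto \tau(\rho)$. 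The remaining algebraic ingredient is that $\tau$ commutes with substitution and with context filling, i.e. $\tau(\sigma(w)) = (\tau\circ\sigma)(\tau(w))$ and $\tau(C[s]) = \tau(C)[\tau(s)]$; both follow by a routine structural induction, the only point of care being that clause~1 of $\tau$ restructures a $\head$-over-$f$ adjacency and therefore shifts positions, although it always maps a subtree to a subtree, so that holes are preserved.

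With these facts the forward direction is immediate: a step of $C'_R$ contracts a redex $\head(f(\ldots)) = \sigma(\mathrm{lhs}(\rho))$ inside a good context; applying $\tau$ turns it into the redex $\headcomp{f}(\ldots) = (\tau\circ\sigma)(\mathrm{lhs}(\tau(\rho)))$ within $\tau$ of the same context, and the commutation identities identify the two contracta modulo renaming. The backward direction is symmetric, since every redex of $T_R$ is the $\tau$-image of a unique $C'_R$ redex. To keep $\tau$ defined along the whole derivation I would prove that goodness is preserved by $C'_R$ steps: the contractum of any rule of $C'_R$ is good, because after phase~1 no rule applies $\head$ to a variable, so every $\head$ it introduces is wrapped around an operation-rooted term; and, crucially, the operations of $R$ are \emph{constructors} of $C'_R$, so a subterm $f(\ldots)$ serving as the argument of a $\head$ can never be rewritten at its root. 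Its root label $f$ persists until the enclosing $\head$ itself fires, so an argument of $\head$ stays operation-rooted and no stuck expression $\head(c(\ldots))$ with $c$ a constructor can arise.

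The step expected to require the most care is this preservation of goodness together with the position bookkeeping of $\tau$: one must argue precisely that $\tau$ restructures only the local $\head$-over-$f$ configuration and otherwise acts homomorphically, so that it transports redexes, contexts, and matches faithfully between the two systems. The graph-rewriting caveat ``modulo a renaming of nodes'' is handled, as in the proof of Lemma~\ref{operation-rooted}, by the standard observation that nodes created by a rewrite are fresh. Given the bisimulation and the preservation of goodness, a straightforward induction on derivation length yields both implications of the corollary.
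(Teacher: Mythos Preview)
Your proposal is correct and follows essentially the same route as the paper: establish a one-step correspondence between $C'_R$ and $T_R$ via the commutation of $\tau$ with matching/substitution (the paper writes this as $\sigma\circ\tau=\tau\circ\sigma$, your $\tau(\sigma(w))=(\tau\circ\sigma)(\tau(w))$ is the same identity stated more carefully), lift it to contexts, induct on derivation length, and finish with $\tau(u)=u$. The only noteworthy difference is packaging: you make the ``goodness'' invariant explicit and argue its preservation, whereas the paper leaves this implicit by relying on the ambient convention that computations in $C_R$ are innermost, so that at each step the argument of the contracted $\head$ is already an expression of $R$ and hence $\tau\circ\sigma=\sigma$.
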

\begin{proof}
Preliminarily, we show that for any $s$,
$\head(t) \to s$ in $C'_R$ iff $\tau(\head(t)) \to \tau(s)$ in $T_R$.
Assume $\head(t) \to s$ in $C'_R$.  There exists a rule $l \to r$ of $C'_R$
and a match (graph homomorphism)
$\sigma$ such that $\head(t)=\sigma(l)$ and $s=\sigma(r)$.
From the definition of phase 2 of the transformation,
$\tau(l) \to \tau(r)$ is a rule of $T_R$.
We show that this rule reduces $\tau(\head(t))$ to $\tau(s)$.
Since $\tau$ is the identity on variables,
and $\sigma$ is the identity on non variables,
$\sigma \circ \tau = \tau \circ \sigma$.
Thus $\tau(\head(t)) = \tau(\sigma(l)) = \sigma(\tau(l)) \to
\sigma(\tau(r)) = \tau(\sigma(r)) = \tau(s)$.
The converse is similar because there a bijection between 
the steps of $C'_R$ and $T_R$.

Now, we prove the main claim.  First, the claim just proved
holds also when $\head(t)$ is in a context.
Then, an induction
on the length of $\head(t) \toplus u$ in $C'_R$ 
shows that $\tau(\head(t)) \toplus \tau(u)$ in $T_R$.
Since by assumption $u$ is an expression of $R$, 
by the definition of $\tau$, $\tau(u)=u$.
\end{proof}

\noindent
Finally, we prove that object code and transformed object code 
execute the same computations.
\begin{theorem}
Let $R$ be an inductively sequential system.
For all expressions $t$ and $u$ of $R$,
$\norm(t) \toplus u$ in $C_R$ if, and only if, $\norm(t) \toplus u$ in $T_R$.
\end{theorem}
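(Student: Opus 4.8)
The plan is to decompose the transformation into its two phases and prove the equivalence one phase at a time, then compose. For phase~1, the first lemma of this section (the step-by-step equivalence of $C_R$ and $C'_R$) already does the work: since phase~1 leaves the $\norm$ rules untouched and only replaces certain $\head$ rules by instances, every step of $C_R$ is a step of $C'_R$ and vice versa, modulo a renaming of nodes. This individual-step correspondence lifts immediately to whole computations, so $\norm(t) \toplus u$ in $C_R$ if, and only if, $\norm(t) \toplus u$ in $C'_R$.

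For phase~2, I would relate $C'_R$ and $T_R$ through the mapping $\tau$. Corollary~\ref{phase-2} gives this correspondence for computations of the shape $\head(t) \toplus u$, and its proof rests on a single-step simulation: for any $s$, $\head(t) \to s$ in $C'_R$ iff $\tau(\head(t)) \to \tau(s)$ in $T_R$, a property that persists when the redex sits in a context. The first thing to do is extend this single-step simulation to $\norm$ steps, since a computation of $\norm(t)$ interleaves $\head$ and $\norm$ reductions. There are two $\norm$ rules to check. The constructor rule $\norm(c(x_1,\ldots x_m)) \to c(\norm(x_1),\ldots \norm(x_m))$ is present unchanged in $T_R$ and commutes with $\tau$, because $\tau$ passes through constructors of $R$ and through $\norm$. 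The operation rule $\norm(f(x_1,\ldots x_n)) \to \norm(\head(f(x_1,\ldots x_n)))$ is mapped by $\tau$ to $\norm(f(x_1,\ldots x_n)) \to \norm(\headcomp{f}(x_1,\ldots x_n))$, which is exactly the rule phase~2 produces, and one checks directly that $\tau$ applied to the $C'_R$ contractum equals the $T_R$ contractum. With both $\norm$ cases and the $\head$ case from Corollary~\ref{phase-2} in hand, I obtain a step bijection: $e \to s$ in $C'_R$ iff $\tau(e) \to \tau(s)$ in $T_R$, for every reachable $e$.

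From the step bijection, an induction on the length of the derivation yields $\norm(t) \toplus s$ in $C'_R$ iff $\tau(\norm(t)) \toplus \tau(s)$ in $T_R$. It then remains only to evaluate $\tau$ at the endpoints. Since $t$ and $u$ are expressions of $R$, they contain no occurrence of $\head$ or $\norm$, so $\tau$ acts as the identity on them; in particular $\tau(\norm(t)) = \norm(\tau(t)) = \norm(t)$ and $\tau(u) = u$. Taking $s = u$ therefore gives $\norm(t) \toplus u$ in $C'_R$ iff $\norm(t) \toplus u$ in $T_R$, and composing this with the phase~1 equivalence proves the theorem.

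The main obstacle I anticipate is bookkeeping rather than conceptual, concentrated in making the single-step simulation genuinely a bijection. Two points need care. First, $\tau$ must be well defined and behave as claimed on every expression reachable from $\norm(t)$: phase~1 is precisely what guarantees that $\head$ is only ever applied to an operation-rooted expression, so the clause $t = \head(f(t_1,\ldots t_n))$ of $\tau$ is exhaustive for occurrences of $\head$, and $\tau$ must be read as passing $\norm$ through like a symbol of $R$, so that $\tau(\norm(e)) = \norm(\tau(e))$. Second, because the rules of $C'_R$ and $T_R$ carry a textual priority, I must verify that the bijection respects this order, so that a step legal under the priority semantics in one system corresponds to a legal step in the other; this holds because phase~2 is a priority-preserving renaming and fusion of rules, and since all computations in question are innermost, the priority imposes no constraint beyond selecting the same image rule.
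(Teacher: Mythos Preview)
Your proposal is correct and arrives at the same conclusion, but the route differs from the paper's. The paper does not build a uniform step-by-step bijection covering both $\head$ and $\norm$ steps. Instead, it exploits the structure of a $\norm$-computation: such a computation decomposes into maximal $\head(s)$ sub-computations orchestrated by the $\norm$ rules, each sub-computation terminating in a head-constructor form by Lemma~\ref{simulated}; Corollary~\ref{phase-2} is then applied wholesale to each such sub-computation to conclude that the corresponding $T_R$ sub-computation yields the same head-constructor form, so the surrounding $\norm$ scaffolding produces the same final result. Your approach, by contrast, absorbs the $\norm$ rules into the single-step simulation (after extending $\tau$ to pass through $\norm$), obtaining a uniform step bijection and then lifting by induction on length. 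What your approach buys is a cleaner treatment of the interaction between $\head$ and $\norm$ steps and an explicit handling of the priority and well-definedness issues the paper leaves implicit; what the paper's approach buys is that it reuses Corollary~\ref{phase-2} as a black box on whole $\head$-computations, avoiding the need to re-open its proof or to extend $\tau$ beyond its stated domain.
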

\begin{proof}
In the computation of $\norm(t)$ in $C_R$, 
by the definition of $\tau$, 
each computation of $\head(s)$ in $C_R$,
for some expression $s$,
is transformed into a computation of $\tau(\head(s))$ in $T_R$.
By Lemma \ref{simulated}, the former ends in a head-constructor
form of $R$.  Hence, by Corollary~\ref{phase-2}, 
$\tau(\head(s))$ ends in the same head-constructor form of $R$.
Thus, $\norm(t) \toplus u$ in $T_R$ produces the same result.
The converse is similar.
\end{proof}

\section{Benchmarking}
\label{Benchmarking}

Our benchmarks use integer values.
To accommodate a built-in integer in a graph node,
we define a kind of node whose
label is a built-in integer rather than a signature symbol.
An arithmetic operation, such as addition,
retrieves the integers labeling its argument nodes,
adds them together, and allocates a new node labeled
by the result of the addition.

Our first benchmark evaluates \code{length($l_1 \code{++}\, l_2$)},
where \code{length} is the operation defined in (\ref{length-source}).
In the table below, we compare the same rewriting computation executed
by $C_R$ and $T_R$.
We measure the number of rewrite and shortcut steps executed, the
number of nodes allocated, and the number of node labels compared by
pattern matching.
The ratio between the execution times of $T_R$ and $C_R$
varies with the implementation language, the order of
execution of some instructions, and other code details that 
would seem irrelevant to the work being performed.
Therefore, we measure quantities that are language and code independent.
The tabular entries are in units per
10 rewrite steps of $C_R$, and are constant functions
of this value except for very short lists.  
For lists of one million elements,
the number of rewrite steps of $C_R$ is two million.

\begin{center}
\renewcommand{\arraystretch}{1.25}
\begin{tabular}{@{} | l || r | r | r | @{}}
\hline
$\emph{length}\,(l_1 \code{++}\, l_2)$ & $C_R$ & $T_R$ & $O_R$ \\
\hline
\hline
rewrite steps     &   10  & 6 & 6 \\
\hline
shortcut steps     &   0  & 4 & 4 \\
\hline
node allocations  &  20  & 16 & 12 \\
\hline
node matches      &  40  & 26 & 18 \\
\hline 
\end{tabular}
\end{center}
\vspace*{.5ex}
\noindent
The column labeled $O_R$ refers to object code that
further shortcuts needed steps using the same idea
behind the transformation.
For example, in the second rule of
(\ref{length-compiled}), both arguments of the addition in
the right-hand side are needed.
This information is known at compile-time,
therefore the compiler can wrap an application of $\head$
around the right operand of ``\code{+}'' in the right-hand side of the rule.
%
\equprogram{%
\label{length-compiled-2}%
\head(length(\us{}:xs)) = \head(1+\head(length(xs)))
}
The composition of $\head$ with \code{length}
is replaced by \headcomp{\code{\small length}}
during the second phase.
The resulting rule is:
\equprogram{%
\label{length-transformed-2}%
\headcomp{\code{\small length}}(\us{}:xs)) 
  = \headcomp{\code{\small +}}(1,\headcomp{\code{\small length}}(xs))
}
Of course, there is no need to allocate a node for
expression \code{1}, the first argument of the addition,
every time rule (\ref{length-transformed}) or 
(\ref{length-transformed-2}) is applied.
A single node can be shared by the entire computation.
However, since the first argument of the application of \headcomp{\code{\small +}}
is constant, this application
can be specialized or partially evaluated as follows:
\equprogram{%
\label{length-transformed-3}%
\headcomp{\code{\small length}}(\us{}:xs)) 
  = \headcomp{\code{\small +1}}(\headcomp{\code{\small length}}(xs))
}
The application of rule
(\ref{length-transformed-3}) allocates no node of the contractum.
In our benchmarks, we ignore any optimization that is not directly related
to shortcutting.
Thus $C_R$, $T_R$ and $O_R$
needlessly allocate this node every time these rules are applied.

The number of shortcut steps of $T_R$ and $O_R$ remain the same
because, loosely speaking, $O_R$ shortcuts a step that was already
shortcut by $T_R$, but the number of nodes allocated and matched
further decreases.
The effectiveness of $T_R$ to reduce node allocations or pattern matching
with respect to $C_R$ varies with the program and the computation.

Our second benchmark computes the $n$-th Fibonacci number
for a relatively large value of $n$.
The program we compile is:
\equprogram{%
\label{fibonacci}%
fib 0 = 0 \\
fib 1 = 1 \\
fib n = fib (n-1) + fib (n-2)
}
To keep the example simple, we assume that
pattern matching is performed by scanning the rules in textual order.
Therefore, the last rule is applied only when the argument of \code{fib}
is neither 0 nor 1.

%
\begin{center}
\renewcommand{\arraystretch}{1.25}
\begin{tabular}{@{} | l || r | r | r | @{}}
\hline
$\emph{fib}\,(n)$ & $C_R$ & $T_R$ & $O_R$ \\
\hline
\hline
rewrite steps     &   10  & 8 & 8 \\
\hline
shortcut steps     &   0  & 2 & 2 \\
\hline
node allocations  &  24  & 22 & 10 \\
\hline
node matches      &  44  & 26 & 16 \\
\hline 
\end{tabular}
\end{center}
\vspace*{.5ex}
The tabular entries are in units per
10 rewrite steps of $C_R$ and are constant functions
of this value except for very small arguments of \code{fib}.
For $n=32$, the number of steps of $C_R$ is about 17.5 million.
With respect to $C_R$, $T_R$ avoids the construction of the
root of the right-hand side of
the third rule of (\ref{fibonacci}).
$O_R$ transforms the right-hand side of this rule into:
\equprogram{%
\label{fibonacci-opt}%
\headcomp{\code{\small +}}(%
  \headcomp{\code{\small fib}}(%
    \headcomp{\code{\small -}}(%
      n,1%
    )%
  ),%
  \headcomp{\code{\small fib}}(%
    \headcomp{\code{\small -}}(%
      n,2%
    )%
  )%
)%
}
since every node that is not labeled by the variable or
the constants \code{1} and \code{2} is needed.
In this benchmark,
there is also no need to allocate a node for
either \code{1} or \code{2} every time
(\ref{fibonacci-opt}) is constructed/executed.
With this further optimization, the step would 
allocate no new node for the contractum,
and the relative gains of our approach
would be even more striking.

\section{Functional Logic Programming}
\label{Functional Logic Programming}

Our work is motivated by the implementation of functional logic languages.
The graph rewriting systems modeling functional logic programs
are a superset of the inductively sequential ones.
A minimal extension consists of a single binary operation,
called \emph{choice}, denoted by the infix symbol ``\code{?}''.
An expression $x$ \code{?} $y$ reduces non-deterministically
to $x$ or $y$.
There are approaches \cite{Antoy11ICLP,AntoyBrownChiang06Termgraph}
for rewriting computations involving the \emph{choice} operation
that produce all the values of an expression without ever
making a non-deterministic choice.
These approaches are ideal candidates to host our compilation scheme.

Popular functional logic languages allow variables,
called \emph{extra variables}, which occur in the right-hand
side of a rewrite rule, but not in the left-hand side.
Computations with extra variables are
executed by \emph{narrowing} instead of rewriting.
Narrowing simplifies encoding certain
programming problems into programs \cite{Antoy10JSC}.
Since our object code selects rules in textual order,
and instantiates some variables of the rewrite system,
narrowing with our object code is not straightforward.
However, there is a technique \cite{AntoyHanus06ICLP}
that transforms a rewrite system \emph{with} extra variables
into an equivalent system \emph{without} extra variables.
Loosely speaking, ``equivalent'', in this context, means a system
with the same input/output relation.
In conjunction with this technique, our compiler generates
code suitable for narrowing computations.

\section{Related Work}
\label{Related Work}

The redexes that we reduce are needed to obtain a constructor-rooted
expression, therefoer they are closely related to the notion
of \emph{root-neededness} of \cite{Middeldorp97POPL}.
However, we are interested only in normal forms that are constructor forms.
In contrast to a computation according to \cite{Middeldorp97POPL},
our object code may abort the computation of an expression $e$
if no constructor normal form of $e$ is reachable, even if $e$
has a needed redex.  This is a very desirable property in 
our intended domain of application
since it saves useless rewrite steps, and in some cases may
lead to the termination of an infinite computation.

Machines for graph reduction have been 
proposed \cite{BurnEtAl88LFP,Kieburtz85} for the implementation
of functional languages.
While there is a commonality of intent, these efforts differ from ours
in two fundamental aspects.
Our object code is easily translated into a low-level language
like $C$ or assembly, whereas these machines have instructions
that resemble those of an interpreter.
There is no explicitly notion of \emph{need} in the computations
performed by these machines.
Optimizations of these machines are directed toward their internal
instructions, rather than the needed steps of a computation by rewriting,
a problem less dependent on any particular mechanism used
to compute a normal form.

Our compilation scheme has similaties with
deforestation \cite{Wadler90TCS},
but is complementary to it.
Both anticipate rule application, to avoid the construction of
expressions that would be quickly taken apart and disposed.  
This occurs when a function producing
one of these expressions is nested within a function
consuming the expression.
However, our expressions are operation-rooted
whereas in deforestation they are constructor-rooted.
These techniques can be used independently
of each other and jointly in the same program.

A compilation scheme similar to ours
is described in \cite{Antoy92CTRS}.
This effort makes no claims of correctness, of executing only needed steps
and of shortcutting needed steps.
Transformations of rewrite systems
for compilation purposes are described in
\cite{FokkinkDePol97MFCS,KampermanWalters96CWI}.
These efforts are more operational than ours.
A compilation with the same intent as ours
is described in \cite{AntoyJost13PPDP}.
The compilation scheme is different.
This effort does not claim to execute only needed steps,
though it shortcuts some of them. Shortcutting is obtained
by defining \emph{ad-hoc} functions whereas
we present a formal systematic way through specializations
of the \emph{head} function.

\section{Conclusion}
\label{Conclusion}

Our work addresses rewriting computations for the implementation
of functional logic languages.  We presented two major results.

The first result is a compilation scheme for inductively sequential
graph rewriting systems.  The object code generated by our scheme
has very desirable properties: it is simple consisting of only
two functions that take arguments by value, it is theoretically
efficient by only executing needed steps, and it is complete
in that it produces the value, when it exists, of any expression.
The two functions of the object code are easily generated
from the signature of the rewrite system and 
a traversal of the definitional trees of its operations.

The second result is a transformation of the object code that
shortcuts some rewrite steps.  Shortcutting avoids partial or
total construction of the contractum of a step by composing
one function of the object code with one operation symbol of
the rewrite system signature.  This avoids the construction of
a node and in some cases and its subsequent pattern matching.
Benchmarks show that the savings in node allocation and matching
can be substantial.

Future work will rigorously investigate the extension of
our compilation technique to rewrite systems with
the \emph{choice} operation and extra variables, as
discussed in Sect.~\ref{Functional Logic Programming},
as well as systematic opportunities to shortcut needed steps
in situations similar to that discussed in Sect.~\ref{Benchmarking}.

\subsection*{Acknowledgments}

This material is based upon work partially supported by the National
Science Foundation under Grant No. CCF-1317249.
This work was carried out while the second author was visiting
the University of Oregon. The second author wishes to thank Zena
Ariola for hosting this visit.
The authors wish to thank Olivier Danvy for insightful comments
and the anonymous reviewers for their careful reviews.

\bibliographystyle{eptcs}

\end{document}